\numberwithin{equation}{section}
\newtheorem{theorem}{Theorem}[section]
\newtheorem{lemma}[theorem]{Lemma}
\newtheorem{algorithm}[theorem]{Algorithm}
\numberwithin{equation}{section}
\theoremstyle{definition}
\newtheoremstyle{myremarkstyle}{}{}{}{}{\bfseries}{.}{ }{}
\theoremstyle{myremarkstyle}
\declaretheorem[name=Remark,qed=$\blacksquare$,numberlike=theorem]{remark}
\declaretheorem[name=Definition,qed=$\blacksquare$,numberlike=theorem]{definition}
\newcommand*{\intavg}{%
  % #1: overlay symbol
  \mint@l{-}{}%
}
\newcommand*{\mint@l}[2]{%
  % #1: overlay symbol
  % #2: limits
  \@ifnextchar\limits{%
    \mint@l{#1}%
  }{%
    \@ifnextchar\nolimits{%
      \mint@l{#1}%
    }{%
      \@ifnextchar\displaylimits{%
        \mint@l{#1}%
      }{%
        \mint@s{#2}{#1}%
      }%
    }%
  }%
}
\newcommand*{\mint@s}[2]{%
  % #1: limits
  % #2: overlay symbol
  \@ifnextchar_{%
    \mint@sub{#1}{#2}%
  }{%
    \@ifnextchar^{%
      \mint@sup{#1}{#2}%
    }{%
      \mint@{#1}{#2}{}{}%
    }%
  }%
}
\def\mint@sub#1#2_#3{%
  \@ifnextchar^{%
    \mint@sub@sup{#1}{#2}{#3}%
  }{%
    \mint@{#1}{#2}{#3}{}%
  }%
}
\def\mint@sup#1#2^#3{%
  \@ifnextchar_{%
    \mint@sub@sup{#1}{#2}{#3}%
  }{%
    \mint@{#1}{#2}{}{#3}%
  }%
}
\def\mint@sub@sup#1#2#3^#4{%
  \mint@{#1}{#2}{#3}{#4}%
}
\def\mint@sup@sub#1#2#3_#4{%
  \mint@{#1}{#2}{#4}{#3}%
}
\newcommand*{\mint@}[4]{%
  % #1: \limits, \nolimits, \displaylimits
  % #2: overlay symbol: -, =, ...
  % #3: subscript
  % #4: superscript
  \mathop{}%
  \mkern-\thinmuskip
  \mathchoice{%
    \mint@@{#1}{#2}{#3}{#4}%
        \displaystyle\textstyle\scriptstyle
  }{%
    \mint@@{#1}{#2}{#3}{#4}%
        \textstyle\scriptstyle\scriptstyle
  }{%
    \mint@@{#1}{#2}{#3}{#4}%
        \scriptstyle\scriptscriptstyle\scriptscriptstyle
  }{%
    \mint@@{#1}{#2}{#3}{#4}%
        \scriptscriptstyle\scriptscriptstyle\scriptscriptstyle
  }%
  \mkern-\thinmuskip
  \int#1%
  \ifx\\#3\\\else_{#3}\fi
  \ifx\\#4\\\else^{#4}\fi  
}
\newcommand*{\mint@@}[7]{%
  % #1: limits
  % #2: overlay symbol
  % #3: subscript
  % #4: superscript
  % #5: math style
  % #6: math style for overlay symbol
  % #7: math style for subscript/superscript
  \begingroup
    \sbox0{$#5\int\m@th$}%
    \sbox2{$#5\int_{}\m@th$}%
    \dimen2=\wd0 %
    % => \dimen2 = width of \int
    \let\mint@limits=#1\relax
    \ifx\mint@limits\relax
      \sbox4{$#5\int_{\kern1sp}^{\kern1sp}\m@th$}%
      \ifdim\wd4>\wd2 %
        \let\mint@limits=\nolimits
      \else
        \let\mint@limits=\limits
      \fi
    \fi
    \ifx\mint@limits\displaylimits
      \ifx#5\displaystyle
        \let\mint@limits=\limits
      \fi
    \fi
    \ifx\mint@limits\limits
      \sbox0{$#7#3\m@th$}%
      \sbox2{$#7#4\m@th$}%
      \ifdim\wd0>\dimen2 %
        \dimen2=\wd0 %
      \fi
      \ifdim\wd2>\dimen2 %
        \dimen2=\wd2 %
      \fi
    \fi
    \rlap{%
      $#5%
        \vcenter{%
          \hbox to\dimen2{%
            \hss
            $#6{#2}\m@th$%
            \hss
          }%
        }%
      $%
    }%
  \endgroup
}
\def\XXint#1#2#3{{\setbox0=\hbox{$#1{#2#3}{\int}$ }
		\vcenter{\hbox{$#2#3$ }}\kern-.6\wd0}}
\renewcommand{\geq}{\geqslant}
\renewcommand{\leq}{\leqslant}
\newcommand{\eps} {\varepsilon}
\renewcommand{\epsilon}{\varepsilon}
\renewcommand{\phi}{\varphi}
\newcommand{\R}{\mathbb{R}}
\newcommand{\N}{\mathbb{N}}
\newcommand{\U}{{\bf U}}		% Phase space
\newcommand{\vel}{\mathbf{u}}
\newcommand{\n}{\mathbf{n}}
\newcommand{\F}{{\bf F}}
\newcommand{\cO}{{\mathcal O}}
\newcommand{\om}{\Omega}
\newcommand{\map}{\EuScript{L}}
\newcommand{\train}{\EuScript{S}}
\newcommand{\test}{\EuScript{T}}
\newcommand{\val}{\EuScript{V}}
\newcommand{\reg}{\EuScript{R}}
\newcommand{\er}{\EuScript{E}}
\newcommand{\cost}{\EuScript{C}}
\newcommand{\Jq}{\EuScript{J}_q}
\newcommand{\Ds}{\EuScript{D}}
\newcommand{\JL}{\EuScript{J}_L}
\newcommand{\Jd}{\EuScript{J}_d}
\begin{document}

\date{\today}

\title{Deep learning observables in computational fluid dynamics}

\author{Kjetil O. Lye \thanks{Seminar for Applied Mathematics (SAM), D-Math \newline
  ETH Z\"urich, R\"amistrasse 101, 
  Z\"urich-8092, Switzerland}, Siddhartha Mishra \thanks{Seminar for Applied Mathematics (SAM), D-Math \newline
  ETH Z\"urich, R\"amistrasse 101, 
  Z\"urich-8092, Switzerland} and Deep Ray \thanks{Department of Computational \& Applied Mathematics, Rice University, Houston, Texas, USA.}}

\date{\today}

\maketitle
\begin{abstract}
Many large scale problems in computational fluid dynamics such as uncertainty quantification, Bayesian inversion, data assimilation and PDE constrained optimization are considered very challenging computationally as they require a large number of expensive (forward) numerical solutions of the corresponding PDEs. We propose a machine learning algorithm, based on deep artificial neural networks, that predicts the underlying \emph{input parameters to observable} map from a few training samples (computed realizations of this map). By a judicious combination of theoretical arguments and empirical observations, we find suitable network architectures and training hyperparameters that result in robust and efficient neural network approximations of the parameters to observable map. Numerical experiments are presented to demonstrate low prediction errors for the trained network networks, even when the network has been trained with a few samples, at a computational cost which is several orders of magnitude lower than the underlying PDE solver. 

Moreover, we combine the proposed deep learning algorithm with Monte Carlo (MC) and Quasi-Monte Carlo (QMC) methods to efficiently compute uncertainty propagation for nonlinear PDEs. Under the assumption that the underlying neural networks generalize well, we prove that the deep learning MC and QMC algorithms are guaranteed to be faster than the baseline (quasi-) Monte Carlo methods. Numerical experiments demonstrating one to two orders of magnitude speed up over baseline QMC and MC algorithms, for the intricate problem of computing probability distributions of the observable, are also presented.

\end{abstract}
\section{Introduction}
Many interesting fluid flows are modeled by so-called \emph{convection-diffusion} equations \cite{LanLip1} i.e, nonlinear partial differential equations (PDEs) of the generic form,
\begin{equation}
\label{eq:cde}
\U_t + {\rm div}_x (\F(\U)) = \nu ~{\rm div}_x({\bf D}(\U) \nabla_x \U), \quad (x,t) \in D \subset \R^{d_s} \times \R_+,
\end{equation}
with $\U\in \R^m$ denoting the vector of unknowns, $\F = (F_i)_{1\leq i \leq d_s}$ the \emph{flux vector}, ${\bf D} = (D_{ij})_{1\leq i,j \leq d_s}$ the 
\emph{diffusion matrix} and $\nu$ a small scale parameter representing kinematic viscosity. 

Prototypical examples for \eqref{eq:cde} include the compressible Euler equations of gas dynamics, shallow water equations of oceanography and the magnetohydrodynamics (MHD) equations of plasma
physics. These PDEs are \emph{hyperbolic systems of conservation laws} i.e, special cases of \eqref{eq:cde} with  $\nu = 0$ and with the flux Jacobian $\partial_{\U} (\F\cdot n)$ having real eigenvalues for all normal vectors $n$. Another important example for \eqref{eq:cde} is provided by the incompressible Navier-Stokes equations, where $0 < \nu << 1$ and the flux function is \emph{non-local} on account of the divergence-free constraint on the velocity field. 

It is well known that solutions to convection-diffusion equations \eqref{eq:cde} can be very complicated. These solutions might include singularities such as shock waves and contact discontinuities in the case of hyperbolic systems of conservation laws.
For small values of $\nu$, these solutions can be generically unstable, even turbulent, and contain structures with a large range of spatio-temporal scales \cite{LanLip1}.

Numerical schemes play a key role in the study of fluid flows and a large variety of robust and efficient numerical methods have been designed to approximate them. These include (conservative) finite difference \cite{LEV2}, finite volume \cite{GR1,HEST1}, discontinuous Galerkin (DG) finite element \cite{HEST1} and spectral (viscosity) methods \cite{Ors1}. These methods have been extremely successful in practice and are widely used in science and engineering today.

The exponential increase in computational power in the last decades provides us with the opportunity to solve very challenging large scale problems in computational fluid dynamics, such as uncertainty quantification (UQ)  \cite{UQbook,UQhb}, (Bayesian) inverse problems \cite{STU1}  and real-time optimal control, design and PDE constrained (shape) optimization \cite{OC,OC1}. In such problems, one is not always interested in computing the whole solution field $\U$ of \eqref{eq:cde}. Rather and in analogy with experimental measurements, one is interested in computing the so-called \emph{observables} (functionals or quantities of interest) for the solution $\U$ of \eqref{eq:cde}. These can be expressed in the generic form,
\begin{equation}
\label{eq:obs1}
L(\U) = \int\limits_{D \times \R_+} \psi(x,t) g(\U(x,t)) dx dt.
\end{equation}
Here $\psi:D \times \R_+ \rightarrow \R$ and $g:\R^m \rightarrow \R$ are suitable test functions. Prototypical examples of such observables (functionals) are provided by body forces, such as the \emph{lift} and \emph{drag} in aerodynamic simulations, and by the runup height (at certain probe points) in simulations of tsunamis.

Moreover in practice, one is interested not just in a single value, but rather in the \emph{statistics} of such observables. Typical statistical quantities of interest are the mean, variance, higher-moments and probability density functions (pdfs) of the observable \eqref{eq:obs1}. Such statistical quantities quantify uncertainty in the solution, propagated from possible uncertainty in the underlying inputs i.e, fluxes, diffusion coefficients, initial and boundary data. They might also stem from the presence of a statistical spread in design parameters such as those describing the geometry of the computational domain. It is customary to represent the resulting solution field as $\U = \U(y)$, with $y \in Y \subset \R^{d}$, a possibly very high dimensional parameter space. Thus, the goal of many CFD simulations is to compute (statistics of) the so-called \emph{parameters to observable} map $y \mapsto L(\U(y))$. 

Calculating a single realization of this parameters to observable map might require a very expensive forward solve of \eqref{eq:cde} with a CFD solver and a quadrature to calculate \eqref{eq:obs1}. However, in UQ, Bayesian inversion or optimal design and control, one needs to evaluate a large number of instances of this map, necessitating a very high computational cost, even on state of the art HPC systems. As a concrete example, we consider a rather simple yet prototypical situation. We are interested in computing statistics of body forces such as lift and drag for a model two-dimensional RAE2822 airfoil \cite{UMRIDA} (see figure \ref{fig:rae_mesh}). A single forward solve of the underlying compressible Euler equations on this airfoil geometry on a mesh of high resolution (see figure \ref{fig:rae_mesh}), with a state of the art high-resolution finite volume solver \cite{RCFM}, takes approximately $7$ wall-clock hours on a HPC cluster (see table \ref{tab:afoilcost}). This cost scales up in three space dimensions proportionately. However, a typical UQ problem such as Bayesian inversion with a state of the art MCMC algorithm, for instance the Random walk Metropolis-Hastings algorithm \cite{STU1}, might need upto $10^5-10^6$ such realizations, rendering even a two-dimensional Bayesian inversion infeasible ! This example illustrates the fact that the high-computational cost of the parameters to observable map makes problems such as forward UQ, Bayesian inversion, and optimal control and design very challenging. Hence, \emph{we need numerical methods which allow ultrafast (several order of magnitude faster than state of the art) computations of the underlying parameters to observable map}.

Machine learning, in the form of artificial neural networks (ANNs), has become extremely popular in computer science in recent years. This term is applied to methods that aim to approximate functions with layers of
units (neurons), connected by (affine) linear operations between units and nonlinear activations within units, \cite{DLbook} and references therein. \emph{Deep learning}, i.e an artificial neural network with a large number of intermediate (hidden) layers has proven extremely successful at diverse tasks, for instance in image processing, computer vision, text and speech recognition, game intelligence and more recently in protein folding \cite{AFOL}, see \cite{DL-nat} and references therein for more applications of deep learning. A key element in deep learning is the \emph{training} of tunable parameters in the underlying neural network by (approximately) minimizing suitable \emph{loss functions}. The resulting (non-convex) optimization problem, on a very high dimensional underlying space, is customarily solved with variants of the stochastic gradient descent method \cite{SG}.

Deep learning is being increasingly used in the context of numerical solution of partial differential equations. Given that neural networks are very powerful universal function approximators \cite{Cy1,Kor1,Bar1,MP1,YAR1}, it is natural to consider the space of neural networks as an ansatz space for approximating solutions of PDEs. First proposed in \cite{Lag1} on an underlying collocation approach, it has been successfully used recently in different contexts in \cite{Kar1,Kar2,JR1,E1,E2,E3} and references therein. Given spatial and temporal locations as inputs, the (deep) neural networks, proposed by these authors, approximate the solution of the underlying PDE, by outputting function values. This approach appears to work quite well for problems with high regularity (smoothness) of the underlying solutions (see \cite{SZ1,E3}) and/or if the solution of the underlying PDE possesses a representation formula in terms of integrals \cite{E2,E3}.  Given their low regularity, it is unclear if solutions of \eqref{eq:cde}, realized as functions of space and time, can be efficiently learned by deep neural networks \cite{INC}.

Several papers applying deep learning techniques in the context of CFD advocate embedding deep learning modules within existing CFD codes to increase their efficiency \cite{INC,DR1,DL_SM1}. In this paper,  we adopt a different approach and propose to use \emph{deep neural networks} to \emph{learn and predict the input parameters to observable map}. Our algorithm will be based on fully connected networks (multi-layer preceptrons) which output values of the observable \eqref{eq:obs1} for different input parameters $y \in Y \subset \R^{d}$ with $d >>1$. The network will be trained on data, generated from a few, say ${\mathcal  O}(100)$, \emph{samples} i.e, realizations of \eqref{eq:obs1} with expensive CFD solvers.

However, the task of designing deep neural networks that will approximate the parameters to observable map with reasonable accuracy is far from straightforward on account of the following issues,
\begin{itemize}
\item Approximation results for deep neural networks \cite{YAR1,Pet1,Pet2}, stipulate a network size of ${\mathcal O}\left(\epsilon^{\frac{-d}{s}}\right)$ for attaining an error of ${\mathcal O}(\epsilon)$, with $s$ denoting the Sobolev regularity of the underlying map. However, and as explained in section \ref{sec:theo}, the parameters to observable map for most CFD problems is atmost Lipschitz continuous. Thus, we might require unrealistically large networks for approximating the underlying maps in this context. 
\item Another relevant metric for estimating errors with deep neural network is the so-called \emph{generalization error} of the trained network, (see section \ref{sec:theo}), that measures the errors of the network in predicting \emph{unseen data}. The best available bounds for this error scale (inversely) with the square-root of the number of training samples. However, in this context, our training samples are generated from expensive CFD simulations. Hence, we might end up with large prediction errors in this \emph{data poor} regime. 
\end{itemize}
Hence, \emph{it is quite challenging to find deep neural networks that can accurately approximate maps of low Sobolev regularity in a data poor regime}. Moreover, there are several hyperparameters that need to be specified in this framework, for instance size of the networks, choice of which variant of the stochastic gradient algorithm that one uses, choice of loss functions and regularizations thereof etc. A priori, the prediction errors can be sensitive to these choices.  We propose an \emph{ensemble training} procedure to search the hyperparameter space systematically and identify network architectures that are efficient in our context. By a combination of theoretical considerations and rigorous empirical experimentation, we provide a recipe for finding appropriate deep neural networks to compute the parameters to observable map i.e, those network architectures which ensure a low generalization error, even for relatively few training samples. 

A second aim of this paper is to employ these trained neural networks, in conjunction with (Quasi-)Monte Carlo methods to compute statistical quantities of interest, particularly the underlying probability distributions of the observable and to demonstrate that these \emph{deep learning (quasi)-Monte Carlo} methods significantly outperform the baseline algorithms. 

The rest of the paper is organized as follows: the deep learning algorithm is presented in section \ref{sec:2}. The deep learning (Quasi-)Monte Carlo algorithm for forward UQ is presented in section \ref{sec:3} and some details for the implementation of both sets of algorithms are provided in section \ref{sec:imp}. In section \ref{sec:num}, we present numerical experiments illustrating the proposed algorithms and the results of the paper are summarized and discussed in section \ref{sec:disc}.
\section{Deep learning algorithm}
\label{sec:2}
\subsection{The problem}
\label{sec:2.1}
We consider the following very general form of a \emph{parameterized} convection-diffusion equation,
\begin{equation}
\label{eq:cdep}
\begin{aligned}
\partial_t \U(t,x,y) + {\rm div}_x (\F(y, \U)) &= \nu ~{\rm div}_x({\bf D}(y,\U) \nabla_x \U), \quad \forall~(t,x,y) \in [0,T] \times D(y) \times Y, \\
\U(0,x,y) &= \overline{\U}(x,y), \quad \forall~ (x,y) \in D(y) \times Y, \\
L_{b} \U(t,x,y) &= \U_b (t,x,y), \quad \forall ~(t,x,y) \in [0,T] \times D(y) \times Y
\end{aligned}
\end{equation}
Here, $Y$ is the parameter space and without loss of generality, we assume it to be $Y = [0,1]^{d}$, for some $d \in {\mathbb N}$.  

The spatial domain is labeled as $y \rightarrow D(y) \subset \R^{d_s}$ and  $\U: [0,T] \times D \times Y  \rightarrow \R^m$ is the vector of unknowns. The flux vector is denoted as $\F = (F_i)_{1\leq i \leq d_s}:\R^m \times Y \rightarrow \R^m$ and ${\bf D} = (D_{ij})_{1\leq i,j \leq d_s}: \R^m \rightarrow \R^m$ is the \emph{diffusion matrix}. 

The operator $L_b$ is a \emph{boundary} operator that imposes boundary conditions for the PDE, for instance the no-slip boundary condition for incompressible Navier-Stokes equations or characteristic boundary conditions for hyperbolic systems of conservation laws. Additional conditions such as hyperbolicity for the flux function $\F$ and positive-definiteness of the diffusion matrix ${\bf D}$ might also be imposed, depending on the specific problem that is being considered. 

We remark that the parameterized PDE \eqref{eq:cdep} will arise in the context of (both forward and inverse) UQ, when the underlying convection-diffusion equation \eqref{eq:cde} contains uncertainties in the domain, the flux and diffusion coefficients and in the initial and boundary data. This input uncertainty propagates into the solution. Following \cite{UQbook} and references therein, it is customary to model such random inputs and the resulting solution uncertainties by \emph{random fields}. Consequently, one can parameterize the probability space on which the random field is defined in terms of a parameter space $Y \subset \R^{d}$, for instance by expressing random fields in terms of (truncated) Karhunen-Loeve expansions. By normalizing the resulting random variables, one may assume $Y = [0,1]^{d}$, with possibly a large value of the parameter dimension $d$. Moreover, there exists a measure, $\mu \in {\rm Prob}(Y)$, with respect to which the data from the underlying parameter space is drawn.

We point out that the above framework is also relevant in problems of optimal control, design and PDE constrained optimization. In these problems, the parameter space $Y$ represents the set of design or control parameters. 

For the parameterized PDE \eqref{eq:cdep}, we aim to compute observables of the following general form,
\begin{equation}
\label{eq:obsp}
L_g(y,\U) := \int\limits_0^T\int\limits_{D_y} \psi(x,t) g(\U(t,x,y)) dx dt, \quad {\rm for} ~\mu~{\rm a.e}~y \in Y.
\end{equation} 
Here, $\psi \in L^1_{{\rm loc}} (D_y \times (0,T))$ is a  \emph{test function} and $g \in C^s(\R^m)$, for $s \geq 1$. Most interesting observables encountered in experiments, such as the lift and the drag, can be cast in this general form. 

For fixed functions $\psi,g$, we also define the \emph{parameters to observable} map:
\begin{equation}
\label{eq:ptoob}
\map:y \in Y \rightarrow \map(y) = L_g(y,\U),
\end{equation}
with $L_g$ being defined by \eqref{eq:obsp}. 

We also assume that there exist suitable numerical schemes for approximating the convection-diffusion equation \eqref{eq:cdep} for every parameter vector $y \in Y$. These schemes could be of the finite difference, finite volume, DG or spectral
type, depending on the problem and on the baseline CFD code. Hence for any mesh parameter (grid size, time step) $\Delta$, we are assuming that for any parameter vector $y \in Y$, a high-resolution approximate solution $\U^{\Delta}(y) \approx \U(y)$ is available. Hence, there exists an approximation to the \emph{input to observable} map $\map$ of the form,
 \begin{equation}
\label{eq:ptoob1}
\map:y \in Y \rightarrow \map(y) = L_g(y,\U^\Delta),
\end{equation}
with the integrals in \eqref{eq:obsp} being approximated to high accuracy by quadratures. Therefore, the original input parameters to observable map $\map$ is approximated by $\map^{\Delta}$ to very high accuracy i.e, for every value of a tolerance $\epsilon > 0$, there exists a $\Delta << 1$, such that
\begin{equation}
\label{eq:err1}
\|\map(y) - \map^{\Delta}(y)\|_{L_{\mu}^p(Y)} < \epsilon,
\end{equation}
for some $1 \leq p \leq \infty$ and weighted norm, 
$$
\|f\|_{L_{\mu}^p(Y)} := \left(\int_Y |f(y)|^p d\mu(y) \right)^{\frac{1}{p}},
$$
for $1 \leq p < \infty$. The $L^{\infty}_{\mu}$ norm is analogously defined.  

\subsection{Deep learning the parameters to observable map}
\label{sec:2.2}
As stated earlier, it can be very expensive to compute the map $\map^{\Delta}(y)$ for each single realization, $y \in Y$, as a high-resolution CFD solver, possibly entailing a very large number of 
degrees of freedom, needs to be used. We propose instead, to  \emph{learn} this map by deep neural networks. This process entails the following steps,
 \subsubsection{Training set.} 
 \label{sec:training_set}
 As is customary in supervised learning \cite{DLbook} and references therein, we need to generate or obtain data to train the network. To this end, we select a set of parameters $\EuScript{S} = \{y_i\}_{1 \leq i \leq N}$, with each $y_i \in Y$. The points in $\EuScript{S}$ can be chosen as,
\begin{itemize}
\item [(i.)] randomly from $Y$, independently and identically distributed with the underlying probability distribution $\mu$. 
\item [(ii.)] from a suitable set of quadrature points in $Y$, for instance the so-called \emph{low discrepancy sequences} that arise in Quasi-Monte Carlo (QMC) quadrature algorithms \cite{CAF1}. Examples of such sequences include Sobol or Halton QMC quadrature points \cite{CAF1}.
\end{itemize}
We emphasize that the generation of QMC points is very cheap, particularly for Sobol or Halton sequences. Moreover, these points are better spread out over the parameter space than a random selection of points and might provide more detailed information about it \cite{CAF1}. Hence, a priori QMC points might be a better choice for sampling the data. One can also replace QMC points with other hierarchical algorithms such as nodes of sparse grids (Smolyak quadrature points) \cite{GRE1} to form the set $\train$.

Once the training parameter set $\train$ is chosen, we perform a set of high-resolution CFD simulations to obtain $\map^{\Delta} (y)$, for all $y \in \train$. As each high-resolution CFD simulation could be very expensive, we will require that $N = \# (\train)$ will not be very large. It will typically be of at most ${\mathcal O}(100)$. 
\begin{figure}[htbp]
\centering
\includegraphics[width=8cm]{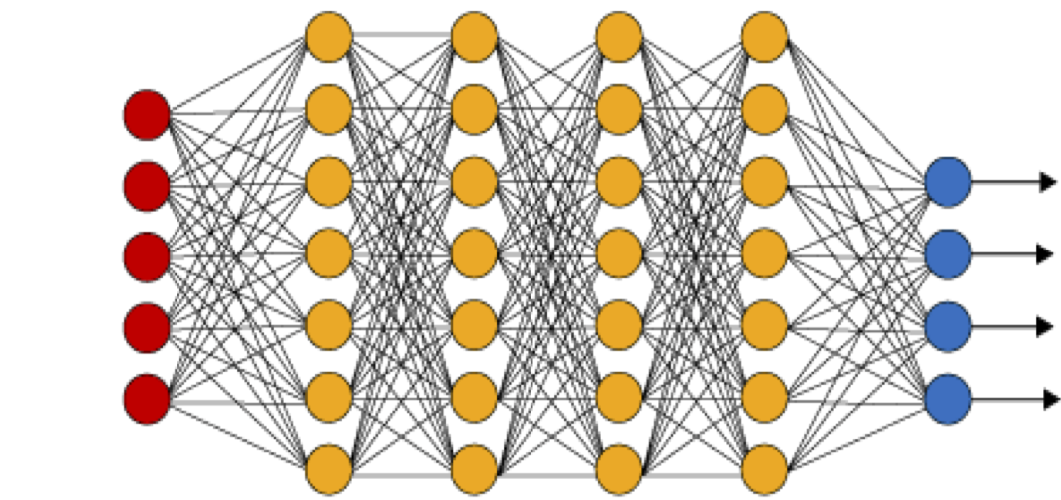}
\caption{An illustration of a (fully connected) deep neural network. The red neurons represent the inputs to the network and the blue neurons denote the output layer. They are
connected by hidden layers with yellow neurons. Each hidden unit (neuron) is connected by affine linear maps between units in different layers and then with nonlinear (scalar) activation functions within units.}
\label{fig:1}
\end{figure}

\subsubsection{Neural network.} 
\label{sec:NN}
Given an input vector, a feedforward neural network (also termed as a multi-layer perceptron), shown in figure \ref{fig:1}, consists of layer of units (neurons) which compose of either affine-linear maps between units (in successive layers) or scalar non-linear activation functions within units, culminating in an output \cite{DLbook}. In our framework, for any input vector $z \in Y$, we represent an artificial neural network as,
\begin{equation}
\label{eq:ann1}
\map_{\theta}(z) = C_K \odot\sigma \odot C_{K-1}\ldots \ldots \ldots \odot\sigma \odot C_2 \odot \sigma \odot C_1(z).
\end{equation} 
Here, $\odot$ refers to the composition of functions and $\sigma$ is a scalar (non-linear) activation function. A large variety of activation functions have been considered in the machine learning literature \cite{DLbook}. A very popular choice is the \emph{ReLU} function,
\begin{equation}
\label{eq:relu}
\sigma(z) = \max(z,0).
\end{equation}
When, $z \in \R^p$ for some $p > 1$, then the output of the ReLU function in \eqref{eq:relu} is evaluated componentwise. 

Moreover, for any $1 \leq k \leq K$, we define
\begin{equation}
\label{eq:C}
C_k z_k = W_k z_k + b_k, \quad {\rm for} ~ W_k \in \R^{d_{k+1} \times d_k}, z_k \in \R_{d_k}, b_k \in \R^{d_{k+1}}.
\end{equation}
For consistency of notation, we set $d_1 = d$ and $d_K = 1$. 

Thus in the terminology of machine learning (see also figure \ref{fig:1}), our neural network \eqref{eq:ann1} consists of an input layer, an output layer and $(K-1)$ hidden layers for some $1 < K \in \N$. The $k$-th hidden layer (with $d_k$ neurons) is given an input vector $z_k \in \R^{d_k}$ and transforms it first by an affine linear map $C_k$ \eqref{eq:C} and then by a ReLU (or another) nonlinear (component wise) activation $\sigma$ \eqref{eq:relu}. Although the neural network consists of composition of very elementary functions, its complexity and ability to learn very general functions arises from the interactions between large number of hidden layers \cite{DLbook}.  

A straightforward addition shows that our network contains $\left(d + 1 + \sum\limits_{k=2}^{K-1} d_k\right)$ neurons. 

We denote, 
\begin{equation}
\label{eq:theta}
\theta = \{W_k, b_k\}, \theta_W = \{ W_k \}\quad \forall~ 1 \leq k \leq K,
\end{equation} 
to be the concatenated set of (tunable) weights for our network. It is straightforward to check that $\theta \in \Theta \subset \R^M$ with
\begin{equation}
\label{eq:ns}
M = \sum\limits_{k=1}^{K-1} (d_k +1) d_{k+1}.
\end{equation}
 Thus, depending on the dimensions of the input parameter vector and the number (depth) and size (width) of the hidden layers, our proposed neural network can contain a large number of weights. Moreover, the neural network explicitly depends on the choice of the weight vector $\theta \in \Theta$, justifying the notation in \eqref{eq:ann1}. 
 
 Although a variety of network architectures, such as convolutional neural networks or recurrent neural networks, have been proposed in the machine learning literature, \cite{DLbook} and references therein, we will restrict ourselves to fully connected architectures i.e, we do not a priori assume any \emph{sparsity} structure for our set $\Theta$. 

\subsubsection{Loss functions and optimization.} 
For any $y \in \train$, we have already evaluated $\map^{\Delta}(y)$ from the high-resolution CFD simulation. One can readily compute the output of the neural network $\map_{\theta} (y)$ for any weight vector $\theta \in \Theta$. We define the so-called 
\emph{loss function} or mismatch function, as 
\begin{equation}
\label{eq:lf1}
J (\theta) : = \sum\limits_{y \in \train} |\map^{\Delta}(y) - \map_{\theta} (y) |^p,
\end{equation}
for some $1 \leq p < \infty$.  

The goal of the training process in machine learning is to find the weight vector $\theta \in \Theta$, for which the loss function \eqref{eq:lf1} is minimized. The resulting optimization (minimization) problem might lead to searching a minimum of a non-convex loss function. So, it is not uncommon in machine learning \cite{DLbook} to regularize the minimization problem i.e we seek to find,
\begin{equation}
\label{eq:lf2}
\theta^{\ast} = {\rm arg}\min\limits_{\theta \in \Theta} \left(J(\theta) + \lambda \reg(\theta) \right).
\end{equation}  
Here, $\reg:\Theta \mapsto \R$ is a \emph{regularization} (penalization) term. A popular choice is to set  $\R(\theta) = \|\theta_W\|^q_q$ for either $q=1$ (to induce sparsity) or $q=2$. The parameter $0 < \lambda << 1$ balances the regularization term with actual loss $J$ \eqref{eq:lf1}. 

The above minimization problem amounts to finding a minimum of a possibly non-convex function over a subset of $\R^M$ for very large $M$. We can approximate the solutions to this minimization problem iteratively, either by a full batch gradient descent algorithm or by a mini-batch stochastic gradient descent (SGD) algorithm. A variety of SGD algorithms have been proposed in the literature and are heavily used in machine learning, see \cite{SG} for a survey.  A generic step in a (stochastic) gradient method is of the form:
\begin{equation}
\label{eq:sgd1}
\theta_{r+1} = \theta_r - \eta_r \nabla_{\theta}\left(J(\theta_k) + \lambda \reg(\theta_k)\right),
\end{equation}
with $\eta_r$ being the \emph{learning rate}. The stochasticity arises in approximating the gradient in \eqref{eq:sgd1} by,
\begin{equation}
\label{eq:sgd2}
\nabla_{\theta} J(\theta_k) \approx \nabla_{\theta}\left(\sum_{y \in \hat{\train}_q}   |\map^{\Delta}(y) - \map_{\theta} (y) |^p \right),
\end{equation}
and analogously for the gradient of the regularization term in \eqref{eq:sgd1}. Here $\hat{\train}_q \subset \train$ refers to the $q$-th batch, with the batches being shuffled randomly. Moreover, the SGD methods are initialized with a starting value $\theta_0 = \bar{\theta} \in \Theta$. A widely used variant of the SGD method is the so-called \emph{ADAM} algorithm \cite{ADAM}. 

For notational simplicity, we denote the (approximate, local) minimum weight vector in \eqref{eq:lf2} as $\theta^{\ast}$ and the underlying deep neural network $\map_{\theta^{\ast}}$ will be our neural network surrogate for the parameters to observable map $\map$ \eqref{eq:ptoob1}. The algorithm for computing this neural network is summarized below,
\begin{algorithm} 
\label{alg:DL} {\bf Deep learning of parameters to observable map}. 
\begin{itemize}
\item [{\bf Inputs}:] Parameterized PDE \eqref{eq:cdep}, Observable \eqref{eq:obsp}, high-resolution numerical method for solving \eqref{eq:cdep} and calculating \eqref{eq:obsp}.
\item [{\bf Goal}:] Find neural network $\map_{\theta^{\ast}}$ for approximating the parameters to observable map $\map$ \eqref{eq:ptoob1}. 
\item [{\bf Step $1$}:] Choose the training set $\train$ and evaluate $\map^{\Delta}(y)$ for all $y \in \train$ by high-resolution CFD simulations. 
\item [{\bf Step $2$}:] For an initial value of the weight vector $\overline{\theta} \in \Theta$, evaluate the neural network $\map_{\overline{\theta}}$ \eqref{eq:ann1}, the loss function \eqref{eq:lf2} and its gradients to initialize the
(stochastic) gradient descent algorithm.
\item [{\bf Step $3$}:] Run a stochastic gradient descent algorithm of form \eqref{eq:sgd1} till an approximate local minimum $\theta^{\ast}$ of \eqref{eq:lf2} is reached. The map $\map^{\ast} = \map_{\theta^{\ast}}$ is the desired neural network approximating the
parameters to observable map $\map$.
\end{itemize}
\end{algorithm}
\subsection{Theory}
\label{sec:theo}
\subsubsection{Approximation with deep neural networks.} 
\label{sec:approx}
 \emph{Universal approximation theorems} \cite{Bar1,Kor1,Cy1} prove that neural networks can approximate any continuous (in fact any measurable) function. However, these universality results are not constructive. More recent papers such as \cite{YAR1,Pet2} provide the following constructive approximation result:
 \begin{theorem}
\label{theo:1} Let $f \in W^{s,p}\left([0,1]^d,\R\right)$ for some $1 \leq p \leq \infty$, such that $\|f\|_{W^{s,p}} \leq 1$, then for every $\epsilon > 0$, there exists a neural network $NN(f)$ of the form \eqref{eq:ann1} with the ReLU activation function, with $\cO\left(1 + \log\left(\frac{1}{\epsilon}\right)\right)$ layers and of size (number of weights) $\cO\left(\epsilon^{-\frac{d}{s}}\left(1 + \log\left(\frac{1}{\epsilon} \right)\right)\right)$ such that
\begin{equation}
\label{eq:approx1}
\|f - NN(f) \|_{L^p} \leq \epsilon, \quad 1 \leq p \leq \infty.
\end{equation}

\end{theorem}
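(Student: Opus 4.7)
The plan follows Yarotsky's constructive scheme cited as \cite{YAR1}. The overall strategy is to first approximate $f$ by a localised piecewise polynomial, then to realise each local polynomial (and the associated partition-of-unity weight) by a small ReLU subnetwork, and finally to assemble all these subnetworks in parallel into a single output network.

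First I would discretise $[0,1]^d$ by a uniform tensor grid of mesh size $h$ and attach to each grid node $\x_m$ a continuous piecewise-linear hat function $\phi_m$ of support diameter $\cO(h)$, so that $\sum_m \phi_m \equiv 1$ on $[0,1]^d$. Each $\phi_m$ is a tensor product of one-dimensional hats; since $|z|=\sigma(z)+\sigma(-z)$ and $\min,\max$ of two reals are representable exactly by one ReLU layer, each $\phi_m$ is a ReLU network of constant depth once multiplication is available. On each patch $\supp\phi_m$, the Bramble--Hilbert / Deny--Lions lemma furnishes a polynomial $P_m$ of total degree $\lceil s \rceil - 1$ with
\[
\|f - P_m\|_{L^p(\supp\phi_m)} \lesssim h^s \|f\|_{W^{s,p}(\supp\phi_m)}.
\]
Setting $f_h := \sum_m \phi_m P_m$ and exploiting the bounded overlap of the supports together with $\|f\|_{W^{s,p}} \leq 1$, I obtain $\|f - f_h\|_{L^p([0,1]^d)} \lesssim h^s$, so the choice $h \sim \epsilon^{1/s}$ produces $\cO(\epsilon^{-d/s})$ patches and an $L^p$-error of order $\epsilon$.

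The main obstacle is that ReLU networks compute only continuous piecewise-affine functions, so $\phi_m P_m$ cannot be produced exactly. The key technical ingredient is Yarotsky's observation that $x \mapsto x^2$ on $[0,1]$ can be approximated to accuracy $\delta$ by a ReLU network of depth and size $\cO(\log(1/\delta))$, obtained by iterating the sawtooth function $g(x) = 2\sigma(x) - 4\sigma(x - 1/2) + 2\sigma(x-1)$. Approximate multiplication follows by the polarisation identity $xy = \tfrac12\bigl((x+y)^2 - x^2 - y^2\bigr)$, and any monomial of bounded degree in $d$ variables (in particular every $\phi_m P_m$) is assembled by $\cO(1)$ such approximate multiplications. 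Fixing the per-multiplication tolerance $\delta$ polynomial in $\epsilon$ keeps every per-patch network of depth and size $\cO(1+\log(1/\epsilon))$.

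Finally, I would paste all local networks in parallel at identical depth and add one output summation layer to obtain $NN(f) = \sum_m \widetilde{\phi_m P_m}$, where $\widetilde{\phi_m P_m}$ denotes the ReLU realisation of $\phi_m P_m$. The total depth is $\cO(1+\log(1/\epsilon))$ and the number of weights is bounded by the patch count times the per-patch size, giving $\cO\bigl(\epsilon^{-d/s}(1+\log(1/\epsilon))\bigr)$, as claimed. The delicate step is the error bookkeeping: the $\cO(\epsilon^{-d/s})$ approximate multiplications and polynomial evaluations each contribute a small error which must be controlled so that the aggregate $L^p$-error over $[0,1]^d$ still satisfies $\|f - NN(f)\|_{L^p} \leq \epsilon$. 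This is absorbed by tightening $\delta$ by a factor polynomial in $\epsilon$ and in the patch count, which only inflates the constants hidden inside $\log(1/\epsilon)$ and hence does not change the stated asymptotics.
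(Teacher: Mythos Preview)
Your sketch is a faithful outline of Yarotsky's construction in \cite{YAR1}, and the essential steps (partition of unity with piecewise-linear hats, local polynomial approximation via a Bramble--Hilbert type estimate, approximate squaring through iterated sawtooth functions, polarisation for products, and the final parallel assembly with error bookkeeping) are all correctly identified. The only caveat is that Yarotsky's original argument treats $W^{s,\infty}$ with integer $s$; the full $W^{s,p}$ statement for general $p$ requires the refinements in \cite{Pet2}, but your use of the Bramble--Hilbert / Deny--Lions lemma in the $L^p$ norm already points in that direction.

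However, you should be aware that the paper itself does not supply any proof of this theorem. It is stated purely as a quotation of the cited literature (\cite{YAR1,Pet2}), introduced by the sentence ``More recent papers such as \cite{YAR1,Pet2} provide the following constructive approximation result'', and is used only to motivate the network-size discussion in Section~\ref{sec:reg}. So there is no ``paper's own proof'' against which to compare your proposal; your write-up goes well beyond what the paper contains and would, if fleshed out, amount to reproducing the argument of the cited references.
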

Given the above approximation result, it is essential to investigate regularity of the underlying parameters to observeble map $\map$ in order to estimate the size of neural networks, needed to approximate them. 
\subsubsection{Regularity of the parameters to observable map \eqref{eq:ptoob}}
\label{sec:reg}
For simplicity, we assume that the underlying domain $D_y = D \subset B_R$ for a. e. $y \in Y$ and $B_R$ is a ball of radius $R$. Moreover, we assume that $\psi, g^k \in L^{\infty}$, with $g^k$ denoting the $k$-th derivative of the map $g$ in \eqref{eq:obsp}, for some $k \geq 1$. Then, it is straightforward to obtain the following upper bound,
\begin{equation}
\label{eq:reg1}
\|\map\|_{W^{k,p}(Y)} \leq C \|\U\|_{W^{k,p}(D\times(0,T) \times Y)},
\end{equation}
with the constant $C$ depending only on the initial data, $\Psi$ and $g$. Thus, regularity in the parameters to observable map is predicated on the space-time and parametric regularity of the underlying solution field $\U$.

Unfortunately, it is well-known that one cannot expect such space-time or parametric regularity for solutions of generic convection-diffusion equations \eqref{eq:cde}. In particular, even for the simple case of scalar one-dimensional conservation laws, at best we can show that the solution $\U \in L^{\infty}((0,T);BV(D\times Y))$. Therefore, one can show readily that 
\begin{equation}
\label{eq:reg4}
\|\map\|_{BV(Y)} \leq C.
\end{equation}
It can be rightly argued that deriving regularity estimates on $\map$ in terms of the underlying solution field $\U$ can lead to rather pessimistic results and might overestimate possible cancellations. This is indeed the case, at least for scalar conservation laws with random initial data i.e, \eqref{eq:cdep} with $m=1, \nu = 0$ and ${\bf F}(y,\U) = {\bf F}(\U)$. In this case, we have the following theorem,
\begin{theorem}
\label{theo:21}
Consider the scalar conservation law, i.e parameterized convection-diffusion PDE \eqref{eq:cdep} with $m=1$ and $\nu = 0$, in domain $D = \R^ {d_s}$ and time period $[0,T]$ and assume that the initial data $u_0 \in W^{1,\infty}\left(Y;L^1(D)\right)$ and support of $u_0$ is compact. Assume, furthermore that $\psi \in L^{\infty}(D \times (0,T))$ and $g \in W^{1,\infty} (\R)$. Then, the parameters to observable map $\map$, defined in \eqref{eq:ptoob} satisfies,
\begin{equation}
\label{eq:reg6}
\|\map\|_{W^{1,\infty}(Y)} \leq C.
\end{equation}
for some constant $C$, depending on the initial data $u_0$, $\psi$ and $g$. Moreover, if the numerical solution $u^{\Delta}$ is generated by a monotone numerical method, then the approximate parameters to observable map $\map^{\Delta}$ \eqref{eq:ptoob1} satisfies
\begin{equation}
\label{eq:reg7}
\|\map^{\Delta}\|_{W^{1,\infty}(Y)} \leq C.
\end{equation}
\end{theorem}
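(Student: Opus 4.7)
The plan is to obtain the Lipschitz bound on $\map$ by chaining together three ingredients: Lipschitz continuity of $g$, the $L^1$-contraction property of entropy solutions to scalar conservation laws, and the assumed regularity of the initial data in the parameter $y$. Concretely, for any two parameters $y_1, y_2 \in Y$, since $\psi \in L^{\infty}(D\times(0,T))$ and $g \in W^{1,\infty}(\R)$, a direct estimate on the definition \eqref{eq:obsp} yields
\begin{equation*}
|\map(y_1) - \map(y_2)| \leq \|\psi\|_{L^\infty}\|g'\|_{L^\infty} \int_0^T \int_D |u(t,x,y_1) - u(t,x,y_2)|\, dx\, dt.
\end{equation*}
The remaining task is to control the inner integrand uniformly by $|y_1 - y_2|$.

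For this I would invoke the classical Kruzhkov $L^1$-contraction: if $u(\cdot,\cdot,y_i)$ ($i=1,2$) are the entropy solutions of \eqref{eq:cdep} (with $m=1$, $\nu=0$, flux independent of $y$) starting from $u_0(\cdot,y_i)$, then
\begin{equation*}
\|u(t,\cdot,y_1) - u(t,\cdot,y_2)\|_{L^1(D)} \leq \|u_0(\cdot,y_1) - u_0(\cdot,y_2)\|_{L^1(D)}, \quad \forall\, t \in [0,T].
\end{equation*}
Combined with the hypothesis $u_0 \in W^{1,\infty}(Y; L^1(D))$ (which gives $\|u_0(\cdot,y_1)-u_0(\cdot,y_2)\|_{L^1(D)} \leq \|u_0\|_{W^{1,\infty}(Y;L^1(D))}|y_1-y_2|$) and the compact support assumption (ensuring the $L^1$ norms are finite), integration in time gives the desired bound
\begin{equation*}
|\map(y_1) - \map(y_2)| \leq T\,\|\psi\|_{L^\infty}\|g'\|_{L^\infty}\|u_0\|_{W^{1,\infty}(Y;L^1(D))}|y_1 - y_2|,
\end{equation*}
which, since $Y$ is a bounded convex domain, is equivalent to \eqref{eq:reg6}.

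For the numerical counterpart \eqref{eq:reg7}, I would repeat the same argument verbatim, with $u^\Delta$ in place of $u$, by invoking the Crandall--Majda theorem: monotone schemes for scalar conservation laws are $L^1$-contractive in the discrete sense, i.e.\ the discrete solution operator satisfies the analogue of Kruzhkov's estimate above with respect to the projected initial data. Because the $L^1$-contraction bound on the discrete side does not depend on $\Delta$, the Lipschitz constant for $\map^\Delta$ is uniform in the mesh parameter, which is what \eqref{eq:reg7} asserts.

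The main obstacle is a notational rather than conceptual one: one must be careful that the flux $\F$ does not depend on $y$ (as stated in the hypothesis) so that the two solutions $u(\cdot,\cdot,y_1)$ and $u(\cdot,\cdot,y_2)$ solve the same PDE differing only by initial data, a prerequisite for Kruzhkov's contraction estimate. The other mild technical point is that the compact support of $u_0$ in $x$, together with finite speed of propagation for conservation laws (and the CFL-type condition built into monotone schemes), ensures that all $L^1(D)$ norms appearing above are indeed finite and that the discrete initial projection preserves the $W^{1,\infty}(Y;L^1)$ bound up to a $\Delta$-independent constant.
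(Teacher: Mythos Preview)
Your proposal is correct and follows essentially the same approach as the paper: both chain together the Lipschitz bound on $g$, the Kruzhkov $L^1$-contraction for entropy solutions, and the $W^{1,\infty}(Y;L^1(D))$ regularity of $u_0$, and both handle \eqref{eq:reg7} by invoking the discrete $L^1$-contraction property of monotone schemes. Your write-up is in fact slightly more careful than the paper's, as you explicitly flag the need for the flux to be $y$-independent and appeal to Crandall--Majda for the discrete contraction.
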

The proof of this theorem is a consequence of the $L^1$ stability (contractivity) of the solution operator for scalar conservation laws \cite{DAF1}. In fact, for any $y,y^{\ast} \in Y$, a 
straightforward calculation using the definition \eqref{eq:ptoob} and the Lipschitz regularity of $g$ yields,
\begin{align*}
    |\map(y) - \map(y^{\ast})| &\leq \|\psi\|_{\infty} \|g\|_{{\rm Lip}}\int\limits_0^T \|u(t,.,y) - u(t,.,y^{\ast})\|_{1} dt, \\
    &\leq \|\psi\|_{\infty} \|g\|_{{\rm Lip}}\int\limits_0^T \|u_0(.,y) - u_0(.,y^{\ast})\|_{1} dt, \quad {\rm by~}L^1-{\rm contractivity} \\
    &\leq \|\psi\|_{\infty} \|g\|_{{\rm Lip}} T \|u_0\|_{W^{1,\infty}(Y,L^1(D))} \leq C
\end{align*}
The above proof also makes it clear that bounds such as \eqref{eq:reg6} and \eqref{eq:reg7} will hold for systems of conservation laws and the incompressible Navier-Stokes equations as long as there is some stability of the field with respect to the input parameter vector. On the other hand, we cannot expect any such bounds on the higher parametric derivatives of $\map$, due to the lack of differentiability of the underlying solution field.

Given a bound such as \eqref{eq:reg6} or \eqref{eq:reg7}, we can illustrate the difficulty of approximating the map $\map$ by considering a prototypical problem, namely that of learning the lift and the drag of the RAE2822 airfoil (see section \ref{sec:aero}). For this problem, the underlying parameter space is six-dimensional. Assuming that $s=1$ in theorem \ref{theo:1} and requiring that the approximation error is at most one percent relative error i.e $\epsilon = 10^{-2}$ in \eqref{eq:approx1}, yields a neural network of size $\cO(10^{12})$ tunable parameters and at least $6$ layers. Such a large network is clearly unreasonable as it will be very difficult to train and expensive to evaluate. 
\subsubsection{Trained networks and generalization error}
\label{sec:gen}
It can be argued that approximation theoretic results can severely overestimate errors with \emph{trained networks}. Rather, the relevant measure is the so-called \emph{generalization error }i.e, once a (local) minimum $\theta^{\ast}$ of \eqref{eq:lf2} has been computed, we are interested in the following error,
\begin{equation}
\label{eq:gerr1}
\er_{G} (\theta^{\ast}) := \left(\int\limits_{Y} |\map^{\Delta}(y) - \map_{\theta^{\ast}}(y)|^p d\mu(y) \right)^{\frac{1}{p}}, \quad 1 \leq p \leq \infty.
\end{equation}
In practice, one estimates the generalization error on a so-called \emph{test set} i.e, $\test \subset Y$, with $\#(\test) >> N = \#(\train)$. For instance, $\test$ could consist of i.i.d random points in $Y$, drawn from the underlying distribution $\mu$. In this case, the generalization error \eqref{eq:gerr1} can be estimated by the \emph{prediction error},
 \begin{equation}
\label{eq:perr}
\er_{P} (\theta^{\ast}) : = \left(\frac{1}{\#(\test)}\sum\limits_{y \in \test} |\map^{\Delta}(y) - \map_{\theta^{\ast}} (y) |^p \right)^{\frac{1}{p}}
\end{equation}

It turns out that one can estimate the generalization error \eqref{eq:gerr1} by using tools from machine learning theory \cite{MLbook} as,
\begin{equation}
\label{eq:gerr2}
\er_{G} (\theta^{\ast}) \leq \sqrt{\frac{U}{N}},
\end{equation}
with $N = \#(\train)$ being the number of training samples. The numerator $U$ is typically estimated in terms of the Rademacher complexity or the Vapnik-Chervonenkis (VC) dimension of the network, see \cite{MLbook} and references therein for definitions and estimates.

Unfortunately, such bounds on $U$ are very pessimistic in practice and over estimate the generalization error by several (tens of) orders of magnitude \cite{Arora,REC1, NEYS1}. Recent papers such as \cite{Arora} provide far sharper bounds by estimating the \emph{compression} i.e number of effective parameters to total parameters in a network. 

Nevertheless, even if we make a stringent requirement of $U \sim \cO(1)$, we might end up with unacceptably high prediction errors of $30-100 \%$ for the $\cO(10^2-10^3)$ training samples that we can afford to generate in practice. 

Summarizing, theoretical considerations outlined above indicate the challenges of \emph{finding deep neural networks to accurately learn maps of low regularity in a data poor regime}. We illustrate these difficulties with a simple numerical example. 
\subsubsection{An illustrative numerical experiment.}
In this experiment we consider the parameter space $Y = [0,1]^6$ the following two maps, 
\begin{equation}
    \label{eq:sine1}
    \map^1(y) = \sum\limits_{i=1}^6 \sin(4 \pi y_i), \quad 
    \map^2(y) = \sum\limits_{i=1}^6 \frac{1}{i^3} \sin(4 \pi y_i).
\end{equation}
Both maps are infinitely differentiable but with high amplitudes of the derivatives. 

We define a neural network of the form \eqref{eq:ann1} and with the architecture specified in table \ref{tab:afoilRefNet}. In order to generate the training set, we select the first $128$ Sobol points in $[0,1]^6$ and sample the functions \eqref{eq:sine1} at these points. The training is performed with ADAM algorithm and hyperparameters defined in table \ref{tab:afoilBPNets} (first line). The performance of the neural network is ascertained by choosing the first $8192$ Sobol points in $[0,1]^6$ as the test set $\test$. The results of evaluating the trained networks on the test set is shown in table \ref{tab:sine} where we present the relative prediction error \eqref{eq:perr} as a percentage and also present the standard deviation (on the test set) of the prediction error. Clearly, the errors are unacceptably high, particularly for the unscaled sum of sines $\map^1$ in \eqref{eq:sine1}. Although still high, the errors reduce by an almost order of magnitude for the scaled sum of sines $\map^2$. This is to be expected as the derivatives for this map are smaller. This example illustrates the difficulty to approximating (even very regular) functions, in moderate to high dimensions, by neural networks when the training set is relatively small. 

\begin{table}
    \centering
    \begin{tabular}{|l|l|l|}
    \hline
        Map & Error (Mean) in $\%$ & Error (Std.) in $\%$ \\
    \hline
        $\sum_{i=1}^6 \sin(4\pi x_i)$ & $133.95$ & $417.83$ \\
        \hline
        $\sum_{i=1}^6 \frac{\sin(4\pi x_i)}{i^3}$ & $43.66$ & $41.26$ \\
        \hline
    \end{tabular}
    \caption{Relative percentage Prediction errors \eqref{eq:perr} with $p=2$, for the trained neural networks in the sum of sines experiment \eqref{eq:sine1}.}
    \label{tab:sine}
\end{table}
\subsubsection{Pragmatic choice of network size}
We estimate network size based on the following very deep but easy to prove fact about training neural networks \cite{REC1},
\begin{lemma}
\label{lem:1}
For the map $\map^{\Delta}$ and for the training set $\train$, with $\#(\train) = N$, there exists a weight vector $\hat{\theta} \in \Theta$, and a resulting neural network of the form $\map_{\theta}$, with $\Theta \subset \R^M$ and $M = {\mathcal O} (d + N)$, such that the following holds,
\begin{equation}
\label{eq:lem1}
\map^{\Delta}(z) = \map_{\hat{\theta}} (z), \quad \forall~ z \in \train.
\end{equation}
\end{lemma}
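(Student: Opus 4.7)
The plan is to construct, explicitly and by hand, a two-hidden-layer ReLU network that interpolates the data $\{(y_i, \map^\Delta(y_i))\}_{i=1}^N$ exactly. The strategy is to first reduce the $d$-dimensional interpolation problem to a one-dimensional one via a generic linear projection, and then to fit $N$ prescribed scalar values at $N$ distinct abscissae using a piecewise-linear sum of ReLU ramps. The reason for using two hidden layers rather than one is precisely to obtain an additive parameter count $\cO(d+N)$ rather than the multiplicative count $\cO(dN)$ that a single hidden layer applying $N$ independent ReLUs to the full input would incur.

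First I would choose a direction $v \in \R^d$ so that the projected abscissae $t_i := \langle v, y_i\rangle$ are pairwise distinct. Such a $v$ exists because the bad set $\bigcup_{i\ne j}\{v : \langle v, y_i - y_j\rangle = 0\}$ is a finite union of at most $\binom{N}{2}$ hyperplanes in $\R^d$ and therefore has Lebesgue measure zero. Relabel so that $t_1 < t_2 < \cdots < t_N$ and set $c_i := \map^\Delta(y_i)$.

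Next I would lay out the architecture explicitly. The first hidden layer consists of a single neuron computing $\sigma(\langle v, y\rangle + c_*)$, where $c_*$ is chosen large enough that $\langle v, y_i\rangle + c_* > 0$ for every training point; this is possible since $\train$ is finite. On the training set the ReLU is then in its linear regime, so this layer forwards the scalar $\eta := \langle v, y\rangle + c_*$ and uses $d+1$ parameters. The second hidden layer contains $N$ neurons, the $k$-th computing $\sigma(\eta - (t_k + c_*)) = \sigma(\langle v, y\rangle - t_k)$, contributing $2N$ parameters. The affine output layer produces
\[
\map_{\hat\theta}(y) \;=\; b_0 + \sum_{k=1}^{N} a_k\, \sigma(\langle v, y\rangle - t_k),
\]
with $N+1$ further parameters. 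Evaluating at $y = y_j$ only the terms with $k < j$ contribute, so the interpolation conditions $\map_{\hat\theta}(y_j) = c_j$ read $b_0 = c_1$ and $\sum_{k=1}^{j-1} a_k (t_j - t_k) = c_j - c_1$ for $j \geq 2$. This is a lower-triangular linear system with strictly positive diagonal entries $t_j - t_{j-1}$, hence uniquely solvable for $a_1, \dots, a_{N-1}$ (with $a_N$ free, e.g.\ $a_N = 0$). Summing parameter counts gives $M = (d+1) + 2N + (N+1) = \cO(d+N)$, as claimed.

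The one substantive point is the generic-projection step, which keeps the parameter count additive in $d$ and $N$; this is handled by the elementary measure-zero argument above. The remaining choice of $c_*$ to keep the first-layer ReLU linear on $\train$ is immediate because the training set is finite, and the 1D interpolation is a direct triangular solve.
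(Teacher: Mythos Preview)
Your construction is correct: the generic projection to make the $t_i$ distinct, the single first-layer neuron kept in its linear regime on $\train$, and the triangular solve for the output coefficients all go through, and the parameter count $(d+1)+2N+(N+1)=\cO(d+N)$ matches the claim.

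As for comparison: the paper does not actually supply a proof. It introduces the lemma as ``a very deep but easy to prove fact'' and defers to \cite{REC1} for the argument, then immediately moves on to the consequence that a network of size $\cO(d+N)$ can achieve zero training error. Your write-up is essentially the construction from \cite{REC1} (Zhang et al., 2017): project the $d$-dimensional inputs onto a generic direction to obtain distinct scalars, then realise an arbitrary function on $N$ ordered abscissae as a sum of ReLU ramps. So you are not diverging from the paper so much as filling in the omitted details with the standard proof from the cited source.
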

The above lemma implies that there exists a neural network with weight vector $\hat{\theta}$  of size ${\mathcal O}(d+N)$ such that the training error defined by,
\begin{equation}
\label{eq:ert}
\er_{T} (\hat{\theta}) : = \frac{1}{\#(\train)}\sum\limits_{y \in \train} |\map^{\Delta}(y) - \map_{\hat{\theta}} (y) |^p = 0.
\end{equation}
Thus, it is reasonable to expect that a network of size $\cO(d + N)$ can be trained by a gradient descent method to achieve a very low training error.
It is customary to monitor the generalization capacity of a trained neural network by computing a so-called \emph{validation set} $\val \subset Y$ with $\val \cap \train = \emptyset$, and evaluating the so-called validation loss,
\begin{equation}
\label{eq:vlf}
J_{\val} (\theta) : = \frac{1}{\#(\val)}\sum\limits_{y \in \val} \|\map^{\Delta}(y) - \map_{\theta} (y) \|_p^p.
\end{equation}
A low validation loss is observed to correlate with low generalization errors. In order to generate the validation set, one can set aside a small proportion, say $10-20\%$ of the training set as the validation set. 
\subsection{Hyperparameters and Ensemble training.}
Given the theoretical challenges of determining neural networks that can provide low prediction errors in our data poor regime, a suitable choice of the hyperparameters in the training process becomes imperative. To this end, we device a simple yet effective \emph{ensemble training algorithm}. To this end, we consider the set of hyperparameters, listed in table \ref{tab:1}. A few comments regarding the listed hyperparameters are in order. We need to choose the number of layers and width of each layer such that the overall network size, given by \eqref{eq:ns} is $\cO(N+d)$. The exponents $p,q$ in the loss function and regularization terms \eqref{eq:lf2} usually take the values of $1$ or $2$ and the regularization parameter $\lambda$ is required to be small. The starting value $\bar{\theta}$ is chosen randomly. 

For each choice of the hyperparameters, we realize a single sample in the hyperparameter space. Then, the machine learning algorithm \ref{alg:DL} is run with this sample and the resulting loss function minimized. Further details of this procedure are provided in section \ref{sec:imp}. We remark that this procedure has many similarities to the \emph{active learning} procedure proposed recently in \cite{E4}. 
\begin{table}[htbp]
\centering
\begin{tabular}{ll}

1. & Number of hidden layers $(K-1)$ \\
2. & Number of units in $k$-th layer $(d_k)$ \\
3. & Exponent $p$ in the loss function \eqref{eq:lf1}. \\
4. & Exponent $q$ in the regularization term in \eqref{eq:lf2} \\
5. & Value of regularization parameter $\lambda$ in \eqref{eq:lf2} \\
6. & Choice of optimization algorithm (optimizer) -- either standard SGD or ADAM. \\
7. & Initial guess $\bar{\theta}$ in the SGD method \eqref{eq:sgd1}. 
\end{tabular}
\caption{Hyperparameters in the training algorithm}
\label{tab:1}
\end{table}
\section{Uncertainty quantification in CFD with deep learning.}
\label{sec:3}
We will employ the trained deep neural networks that approximate the parameters to observable map $\map^{\Delta}$, to efficiently quantify uncertainty in the underlying map. We are interested in computing the entire measure (probability distribution) of the observable. To this end, we assume that the underlying probability distribution on the input parameters to the problem \eqref{eq:cdep} is given by $\mu \in {\rm Prob}(Y)$. In the context of forward UQ, we are interested in how this initial measure is changed by the parameters to observable map $\map$ (or its high-resolution numerical surrogate $\map^{\Delta}$). Hence, we consider the \emph{push forward} measure $\hat{\mu}^{\Delta} \in {\rm Prob}(\R)$ given by 
\begin{equation}
\label{eq:pf1} 
\hat{\mu}^{\Delta} := \map^{\Delta}\#\mu, \quad \Rightarrow \quad \int_{\R} h(z) d\hat{\mu}^{\Delta}(z) = \int_{Y} h(\map^{\Delta}(y)) d\mu(y), 
\end{equation}
for any $\mu$-measurable function $h:\R \mapsto \R$. It should be emphasized that any statistical moment of interest can be computed by integrating an appropriate test function with respect to this measure $\hat{\mu}^{\Delta}$. For instance, letting $h(w) = w$ in \eqref{eq:pf1} yields the mean of the observable. Similarly, the variance can be computed from the mean and letting $h(w) = w^2$ in \eqref{eq:pf1}. \emph{The task of efficiently computing this probability distribution is significantly harder than just estimating the mean and variance of the underlying map}.

The simplest algorithm for computing this measure $\hat{\mu}^{\Delta}$ is to use the Monte Carlo algorithm (\cite{CAF1} and references therein). It consists of selecting $J$ samples i.e points $\{ y_j \}$ with each $y_j \in Y$ and $1 \leq j \leq J$, that are independent and identically distributed (\emph{iid}). Then, the Monte Carlo approximation of the push-forward measure is given by, 
\begin{equation}
\label{eq:mc}
\hat{\mu}_{mc} = \frac{1}{J}\sum\limits_{j=1}^{J} \delta_{\map^{\Delta}(y_j)} \quad \Rightarrow \quad \int_{\R} h(z) d\hat{\mu}_{mc}(z) = \frac{1}{J}\sum\limits_{j=1}^{J} h\left(\map^{\Delta}(y_j) \right).
\end{equation}
It can be shown that $\hat{\mu}_{mc} \approx \hat{\mu}^{\Delta}$ with an error estimate that scales (inversely) as the square root of the number of samples $J$. Hence, the Monte Carlo algorithm can be very expensive in practice. 
\subsection{Quasi-Monte Carlo (QMC) methods}
\label{sec:qmc}
\subsubsection{Baseline QMC algorithm.}
Quasi-Monte Carlo (QMC) methods are deterministic quadrature rules for computing integrals, \cite{CAF1} and references therein. The key idea underlying QMC is to choose a (deterministic) set of points on the domain of integration, designed to achieve some measure of \emph{equidistribution}. Thus, the QMC algorithm is expected to approximate integrals with higher accuracy than MC methods, whose nodes are randomly chosen \cite{CAF1}. 

For simplicity, we set the underlying measure $\mu$ to be the scaled Lebesgue measure on $Y$. Our aim is to approximate the push forward measure $\hat{\mu}^{\Delta}$ with the QMC algorithm. To this end, we choose a set of points $\Jq = \{ y_j \} \subset Y$ with $1 \leq j \leq J_q = \#(\Jq)$. Then the measure $\hat{\mu}^{\Delta}$ can be approximated by,
\begin{equation}
\label{eq:m1}
\hat{\mu}_{qmc} = \frac{1}{J_q}\sum\limits_{j=1}^{J_d} \delta_{\map^{\Delta}(y_j)} \quad \Rightarrow \quad \int_{\R} h(z) d\hat{\mu}_{qmc}(z) = \frac{1}{J_q}\sum\limits_{j=1}^{J_q} h\left(\map^{\Delta}(y_j) \right),
\end{equation}
for any measurable function $h:Y \to \R$. 

We want to estimate the difference between the measures $\hat{\mu}^{\Delta}$ and $\hat{\mu}_{qmc}$ and need some metric on the space of probability measures, to do so. One such metric is the so-called Wasserstein metric \cite{VIL1}. To define this metric, we need the following, 
\begin{definition}
Given two measures, $\nu,\sigma \in {\rm Prob}(\R)$, a transport plan $\pi\in {\rm Prob}(\R^2)$ is a probability measure on the product space such that the following holds for all measurable functions $F,G$,
\begin{equation}
\label{eq:w1}
\int_{\R \times \R} (F(u) + G(v)) d\pi(u,v) = \int_{\R} F(u) d\nu(u) + \int_{\R} G(v) d\sigma(v).
\end{equation}
The set of transport plans is denoted by $\Pi(\nu,\sigma)$
\end{definition} 
Then the $1$-Wasserstein distance \cite{VIL1} between $\hat{\mu}^{\Delta}$ and $\hat{\mu}_{qmc}$ is defined as,
\begin{equation}
\label{eq:w2}
W_1\left(\hat{\mu}^{\Delta}, \hat{\mu}_{qmc} \right) := \inf\limits_{\pi \in \Pi\left(\hat{\mu}^{\Delta}, \hat{\mu}_{qmc}\right)} \int\limits_{\R \times \R} |u - v| d\pi(u,v)
\end{equation}
It can be proved using the Koksma-Hlawka inequality and some elementary optimal transport techniques that the 
error in the Wasserstein metric behaves as,
\begin{equation}
\label{eq:m3}
W_1\left(\hat{\mu}^{\Delta}, \hat{\mu}_{qmc} \right) \sim V^{\Delta} \Ds(\Jq).
\end{equation}
Here, $V^{\Delta}$ is some measure of \emph{variation} of $\map^{\Delta}$. It can be bounded above by
\begin{equation}
\label{eq:vhk}
V^{\Delta}  \leq C \int\limits_{Y} \left | \frac{\partial^{d}}{\partial y_1 y_2 \cdots y_d} \map^{\Delta}(y) dy \right|,
\end{equation}
the Hardy-Krause variation of the function $\map^{\Delta}$. This upper bound requires some regularity for the underlying map in terms of mixed partial derivatives. However, it is well-known that the bound \eqref{eq:vhk} is a large overestimate of the integration error \cite{CAF1}. 

The $\Ds(\Jq)$ in \eqref{eq:m3} is the so-called \emph{discrepancy} of the point sequence $\Jq$, defined formally in \cite{CAF1} and references therein. It measures how equally is the point sequence $\Jq$ spread out in $Y$. The whole objective in the development of quasi-Monte Carlo (QMC) methods is to design \emph{low discrepancy sequences}. In particular, there exists many popular QMC sequences such as Sobol, Halton or Niederreiter \cite{CAF1} that satisfy the following bound on discrepancy,
\begin{equation}
\label{eq:lds}
  \Ds(\Jq) \sim \frac{(\log(J_q))^{d}}{J_q}.
  \end{equation}
  Thus, the integration error with QMC quadrature based on these rules behaves log-linearly with respect to the number of quadrature points. For simplicity, we assume that the dimension is moderate enough to replace the logarithms in \eqref{eq:m3} by a suitable power leading to 
  \begin{equation}
\label{eq:m2}
W_1\left(\hat{\mu}^{\Delta}, \hat{\mu}_{qmc} \right) \sim \left(\frac{V^{\Delta}}{J_q}\right)^{\alpha},
\end{equation}
for some $1/2 < \alpha \leq 1$. Thus, for moderate dimensions, we see that the QMC algorithm outperforms the baseline MC algorithm. 

Requiring that the Wasserstein distance is of size $\cO(\epsilon)$ for some tolerance $\epsilon > 0$, entails choosing $J_q \sim \frac{V^{\Delta}}{\epsilon^{\frac{1}{\alpha}}}$ and leads to a cost of 
\begin{equation}
\label{eq:cqmc1}
\hat{C}_{qmc} \sim \cost \frac{V^{\Delta}}{\epsilon^{\frac{1}{\alpha}}},
\end{equation}
with $\cost$ being the computational cost of a single CFD forward solve. This can be very expensive, particularly for smaller values of $\alpha$, as the cost of each forward solve is very high.
\subsubsection{Deep learning quasi-Monte Carlo (DLQMC)}
\label{sec:dlqmc}
In order to accelerate the baseline QMC method, we describe an algorithm that combines the QMC method with a deep neural network approximation of the underlying parameters to observable map.
\begin{algorithm}
\label{alg:dlqmc}
{\bf Deep learning Quasi-Monte Carlo (DLQMC).} 
\begin{itemize}
\item [{\bf Inputs:}] Parameterized PDE \eqref{eq:cdep}, Observables \eqref{eq:obsp}, high-resolution numerical method for solving \eqref{eq:cdep} and calculating \eqref{eq:obsp}. 
\item [{\bf Goal:}] Approximate the push-forward measure $\hat{\mu}^{\Delta}$.
\item [{\bf Step 1:}] For $J_L \in \N$, select $\JL = \{y_j\} \subset Y$, with $1 \leq j \leq J_L$, as the first $J_L$ points from a Quasi-Monte Carlo low-discrepancy sequence such as Sobol, Halton etc. 
\item [{\bf Step 2:}] For some $N << J_L$, denote $\train= \{y_j\} \subset \JL$, with $1 \leq j \leq N$ i.e, first $N$ points in $\JL$, and evaluate the map $h(\map^{\Delta}(y_j))$, for all $y_j \in \Jd$ with the high-resolution CFD simulation. As a matter of fact, any consecutive set of $N$ QMC points can serve as the training set. 
\item [{\bf Step 3:}]  With $\train$ as the training set,  compute an optimal neural network $\map^{\ast}$ for the parameters to
observable map $\map^{\Delta}$, by using algorithm \ref{alg:DL}.
\item [{\bf Step 4:}]  Approximate the measure $\hat{\mu}^{\Delta}$ by
\begin{equation}
\label{eq:dlqmc1}
\hat{\mu}^{\ast}_{qmc} = \frac{1}{J_L}\sum\limits_{j=1}^{J_L} \delta_{\map^{\ast}(y_j)},
\end{equation}
\end{itemize}
\end{algorithm}
The complexity of the DLQMC algorithm is analyzed in the following theorem. 
\begin{theorem}
\label{thrm:4}
For any given tolerance $\epsilon > 0$ and under the assumption that the generalization error \eqref{eq:gerr1} is of size $\cO(\epsilon)$ and the baseline QMC estimator \eqref{eq:m1} follows the error estimate \eqref{eq:m2}, the speedup $\Sigma_{dlqmc}$, defined as the ratio of the cost of baseline QMC algorithm and the cost of the DLQMC algorithm \ref{alg:dlqmc} for approximating the push-forward measure $\hat{\mu}^{\Delta}$ to an error of size ${\mathcal O} (\epsilon)$ in the $1$-Wasserstein metric, with the DLQMC algorithm \ref{alg:dlqmc}, satisfies
\begin{equation}
\label{eq:sp7}
\frac{1}{\Sigma_{dlqmc}} \sim \frac{N}{J_q} +\frac{\cost_{\ast}}{\cost} \frac{V^{\ast}}{V^{\Delta}}.
\end{equation}
Here, $\cost, \cost_{\ast}$ are the computational cost of computing $\map^{\Delta}(y)$ (high resolution CFD simulation) and $\map^{\ast}(y)$ (deep neural network) for any $y \in Y$ and $V^{\ast} = V(\map^{\ast})$ is an estimate on the variation that arises in a QMC estimate such as \eqref{eq:m2}. 
\end{theorem}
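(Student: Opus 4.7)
The plan is to compare two cost budgets, both calibrated to deliver the same tolerance $\epsilon$ in the $1$-Wasserstein metric. For the baseline QMC algorithm, the inversion of \eqref{eq:m2} already tells us that $J_q \sim V^\Delta / \epsilon^{1/\alpha}$ samples suffice and, by \eqref{eq:cqmc1}, the total cost is $\hat{C}_{qmc} \sim \cost \cdot J_q$. For the DLQMC algorithm, the total cost splits into a training cost, essentially $N \cdot \cost$ (the $N$ high-resolution CFD runs needed to assemble $\train$; the SGD itself is comparatively negligible), plus an inference cost $J_L \cdot \cost_{\ast}$ for evaluating the trained surrogate $\map^{\ast}$ at the $J_L$ QMC nodes in \eqref{eq:dlqmc1}. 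So the strategy is first to pin down how large $J_L$ must be, then form the ratio.

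To pin down $J_L$, I would invoke the triangle inequality
\[
W_1\bigl(\hat{\mu}^{\Delta}, \hat{\mu}^{\ast}_{qmc}\bigr) \leq W_1\bigl(\hat{\mu}^{\Delta}, \hat{\mu}^{\ast}\bigr) + W_1\bigl(\hat{\mu}^{\ast}, \hat{\mu}^{\ast}_{qmc}\bigr),
\]
where $\hat{\mu}^{\ast} := \map^{\ast} \# \mu$ is the push-forward of $\mu$ under the trained network. For the first summand, using the synchronous coupling $(\map^{\Delta}(y), \map^{\ast}(y))$ with $y \sim \mu$ gives the bound $\int_Y |\map^{\Delta}(y) - \map^{\ast}(y)| d\mu(y) = \er_G(\theta^{\ast})$ with $p=1$ in \eqref{eq:gerr1}, which is $\cO(\epsilon)$ by hypothesis. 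For the second summand, the estimate \eqref{eq:m2} applied to the map $\map^{\ast}$ instead of $\map^{\Delta}$ yields $W_1(\hat{\mu}^{\ast}, \hat{\mu}^{\ast}_{qmc}) \sim (V^{\ast}/J_L)^{\alpha}$, so requiring this to be $\cO(\epsilon)$ forces $J_L \sim V^{\ast}/\epsilon^{1/\alpha}$.

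Substituting back, the DLQMC cost becomes $\hat{C}_{dlqmc} \sim N\cost + (V^{\ast}/\epsilon^{1/\alpha})\cost_{\ast}$, and dividing by $\hat{C}_{qmc} \sim \cost \cdot V^{\Delta}/\epsilon^{1/\alpha}$ together with $J_q \sim V^{\Delta}/\epsilon^{1/\alpha}$ gives exactly
\[
\frac{1}{\Sigma_{dlqmc}} = \frac{\hat{C}_{dlqmc}}{\hat{C}_{qmc}} \sim \frac{N}{J_q} + \frac{\cost_{\ast}}{\cost}\frac{V^{\ast}}{V^{\Delta}},
\]
which is \eqref{eq:sp7}. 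The main conceptual obstacle is the second triangle-inequality summand: one has to argue that the QMC error estimate \eqref{eq:m2} applies to $\map^{\ast}$ with an appropriate variation $V^{\ast}$. Since a neural network with ReLU activations is only Lipschitz and not smooth in the classical sense, the naïve Hardy--Krause bound \eqref{eq:vhk} does not directly apply, and the justification must rely on the interpretation of $V^{\ast}$ as the effective variation appearing in Koksma--Hlawka-type bounds (the same interpretation already used for $V^{\Delta}$). The remaining steps, including the transport-plan argument for the synchronous-coupling bound on $W_1(\hat{\mu}^{\Delta}, \hat{\mu}^{\ast})$ and the bookkeeping of the training cost, are routine.
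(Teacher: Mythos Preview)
Your proposal is correct and follows essentially the same route as the paper: triangle inequality for $W_1$, synchronous-coupling bound $W_1(\hat{\mu}^{\Delta},\hat{\mu}^{\ast})\leq \er_G$, the QMC estimate \eqref{eq:m2} applied to $\map^{\ast}$ to size $J_L$, and then the cost bookkeeping to obtain \eqref{eq:sp7}. Your remark on the Hardy--Krause issue for ReLU surrogates is a fair caveat that the paper does not spell out; otherwise the arguments coincide.
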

\begin{proof}
We define the measure $\hat{\mu}^{\ast} = \map^{\ast}\# \mu \in {\rm Prob}(\R)$ and estimate,
$$
W_1\left(\hat{\mu}^{\Delta},\hat{\mu}^{\ast}_{qmc} \right) \leq W_1\left(\hat{\mu}^{\Delta},\hat{\mu}^{\ast} \right) + W_1\left(\hat{\mu}^{\ast},\hat{\mu}^{\ast}_{qmc} \right).
$$
We claim that assuming that the generalization error $\er_G \sim \epsilon$ implies that $W_1\left(\hat{\mu}^{\Delta},\hat{\mu}^{\ast} \right) \sim \epsilon$.  To see this, we define a transport plan $\pi^{\ast} \in {\rm Prob}(\R^2)$ by $\pi^{\ast} = \hat{\mu}^{\Delta} \otimes \hat{\mu}^{\ast}$. Then, 
\begin{align*}
\int_{\R^2} |u-v| d\pi^{\ast}(u,v) &= \int_{\R^2} |u-v| d(\map^{\Delta}\#\mu \times \map^{\ast}\#\mu)(u,v), \\
&= \int_{Y} |\map^{\Delta}(y) - \map^{\ast}(y)| d\mu(y) := \er_G \\
&\sim \epsilon.
\end{align*} 
This provides an upper bound on the Wasserstein distance \eqref{eq:w2} and proves the claim.  Similarly, we use an estimate of the type \eqref{eq:m2} to obtain $W_1\left(\hat{\mu}^{\ast},\hat{\mu}^{\ast}_{qmc} \right) \sim \left(\frac{V^{\ast}}{J_L}\right)^{\alpha}$. Requiring this error to be $\cO(\epsilon)$ yields $J_L \sim \frac{V^{\ast}}{\epsilon^{\frac{1}{\alpha}}}$ and leads to the following estimate on the total cost of the DLQMC algorithm,
\begin{equation}
\label{eq:cqmc2}
\hat{C}_{dlqmc} \sim \cost N + \cost_{\ast} \frac{V^{\ast}}{\epsilon^{\frac{1}{\alpha}}},
\end{equation}
with $\cost,\cost_{\ast}$ being the computational cost of a single CFD forward solve and evaluation of the deep neural network $\map^{\ast}$, respectively.  

Dividing \eqref{eq:cqmc2} with \eqref{eq:cqmc1} and using $J_q \sim \frac{V^{\Delta}}{\epsilon^{\frac{1}{\alpha}}}$ leads to the speedup estimate \eqref{eq:sp7}.

\end{proof}
\begin{remark}
    In particular, we know that the cost $\cost = {\mathcal O}\left(\Delta ^{-\frac{1}{d+1}}\right)$ is very high for a small mesh size $\Delta$. On the other hand the cost $\cost_{\ast} = \cO(M)$ with $M$ being the number of neurons \eqref{eq:ns} in the network. This cost is expected to be much lower than $\cost$. Moreover, it is reasonable to expect that $V^{\ast} \approx V^{\Delta}$. As long as we can guarantee that $N << J_q$, it follows from \eqref{eq:sp7} that the speedup $\Sigma_{dlqmc}$, which measures the gain of using the deep learning based DLQMC algorithm \ref{alg:dlqmc} over the baseline Quasi-Monte Carlo method, can be quite substantial.
\end{remark} 
\section{Implementation of ensemble training}
\label{sec:imp}
In the following numerical experiments, we use fully connected neural networks with ReLU activation functions. Essential hyperparameters are listed in table \ref{tab:1}. For each experiment, we use a reference network architecture i.e number of hidden layers and width per layer, such that the total network size is $\cO(d + N)$. This ensures consistency with the prescriptions of lemma \ref{lem:1}. 

We use two choices of the exponent $p$ in the loss function \eqref{eq:lf1} i.e either $p=1$ or $p=2$, denoted as mean absolute error (MAE) or mean square error (MSE), respectively. Similarly the exponent $q$ in the regularization term \eqref{eq:lf2} is either $q=1$ or $q=2$. In order to keep the ensemble size tractable, we chose four different values of the regularization parameter $\lambda$ in \eqref{eq:lf2} i.e, $\lambda = 7.8\times10^{-5}, 7.8\times10^{-6}, 7.8\times10^{-7},0$, corresponding to different orders of magnitude for the regularization parameter. For the optimizer of the loss function, we use either a standard SGD algorithm or ADAM. Both algorithms are used in the \emph{full batch} mode. This is reasonable as the number of training samples are rather low in our case.

A key hyperparameter is the starting value $\bar{\theta}$ in the SGD or ADAM algorithm \eqref{eq:sgd1}. We remark that the loss function \eqref{eq:lf2} is non-convex and we can only ensure that the gradient descent algorithms converge to a local minimum. Therefore, the choice of the initial guess is quite essential as different initial starting values might lead to convergence to local minima with very different loss profiles and generalization properties. Hence, it is customary in machine learning to \emph{retrain} i.e, start with not one, but several starting values and run the SGD or ADAM algorithm in parallel. Then, one has to select one of the resulting trained networks. The ideal choice would be to select the network with the best generalization error. However, this quantity is not accessible with the data at hand. Hence, we rely on the following surrogates for predicting the best generalization error,
\begin{itemize}
\item [1.] {\bf Best trained network.} (\emph{train}): We choose the network that leads to the lowest training error \eqref{eq:ert}, once training has been terminated. 
\item [2.] {\bf Best validated network.} (\emph{val}): We choose the network that leads to the lowest validation loss \eqref{eq:vlf}, once training has been terminated. One disadvantage of this approach is to sacrifice some training data for the validation set. 
\item [3.]{\bf Best trained network wrt mean.} (\emph{mean-train}): We choose the network that has minimized the error in the mean of the training set i.e,
\begin{equation}
\label{eq:ermean}
\er_{mean} := \left| \frac{1}{N}\sum\limits_{j=1}^N \map^{\Delta}(y_j) -  \frac{1}{N}\sum\limits_{j=1}^N \map^{\ast}(y_j)\right|, \quad \forall y_j \in \train.
\end{equation}
\item [4.] {\bf Best trained network wrt Wasserstein.} (\emph{wass-train}): We choose the network that has minimized the error in the Wasserstein metric with respect to the training set $\train$:
\begin{equation}
\label{eq:erwass}
\er_{wass} := W_1\left( \frac{1}{N}\sum_{j=1}^N \delta_{\map^{\Delta}(y_j)},   \frac{1}{N}\sum_{j=1}^N \delta_{\map^{\ast}(y_j)} \right), \quad \forall y_j \in \train.
\end{equation}
\end{itemize}
We note that Wasserstein metric for measures on $\R$ can be very efficiently computed with the Hungarian algorithm \cite{HUN1}. 

Another hyperparameter (property) is whether we choose randomly distributed points (Monte Carlo) or more uniformly distributed points, such as quasi-Monte Carlo (QMC) quadrature points, to select the training set $\train$. We use both sets of points in the numerical experiments below in order to ascertain which choice is to be preferred in practice. Other important hyperparameters in machine learning are the \emph{learning rate} in \eqref{eq:sgd1} and the number of training epochs. Unless otherwise stated, we keep them fixed to the values $\eta_r = = 0.01, \forall r$, for both the SGD and ADAM algorithms and to $500000$ epochs, respectively. 
\subsection{Code}
\label{sec:code}
All the following numerical experiments use finite volume schemes as the underlying CFD solver. The machine learning and ensemble training runs are performed by a collection of Python scripts and Jupyter notebooks, utilizing Keras~\cite{keras} and Tensorflow~\cite{tensorflow} for machine learning and deep neural networks. The ensemble runs over hyperparameters are parallelized as standalone processes, where each process trains the network for one choice of hyperparameters. Jupyter notebooks for all the numerical experiments can be downloaded from \url{https://github.com/kjetil-lye/learning_airfoils}, under the MIT license. A lot of care is taken to ensure reproducability of the numerical results. All plots are labelled with the git commit SHA code, that produced the plot, in the upper left corner of the plot.
\begin{figure}[htbp]
\subfigure[Airfoil shape]{\includegraphics[width=0.3\textwidth]{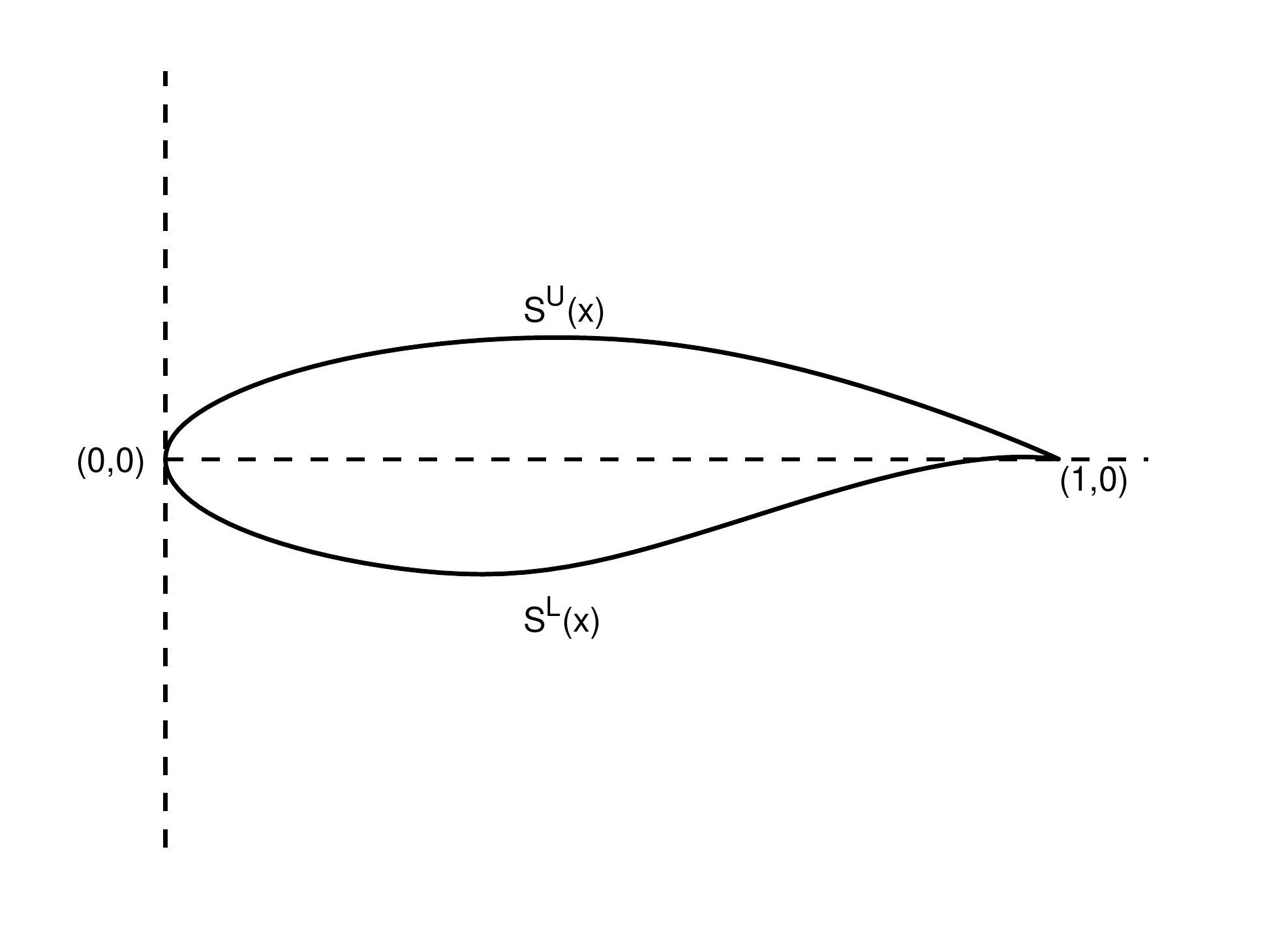}}
\subfigure[Primary mesh]{\includegraphics[width=0.3\textwidth]{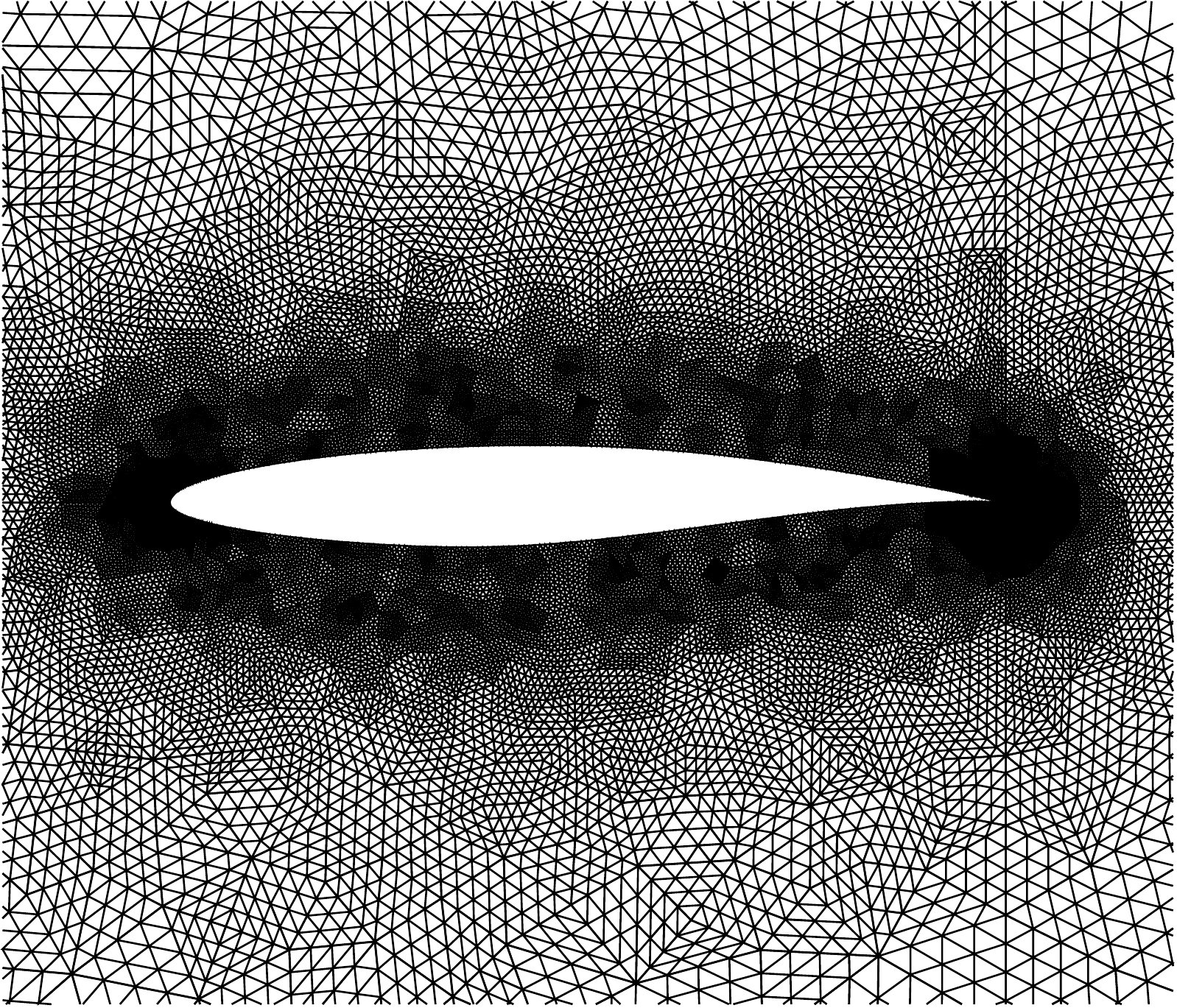}}
\subfigure[Dual mesh]{\includegraphics[width=0.3\textwidth]{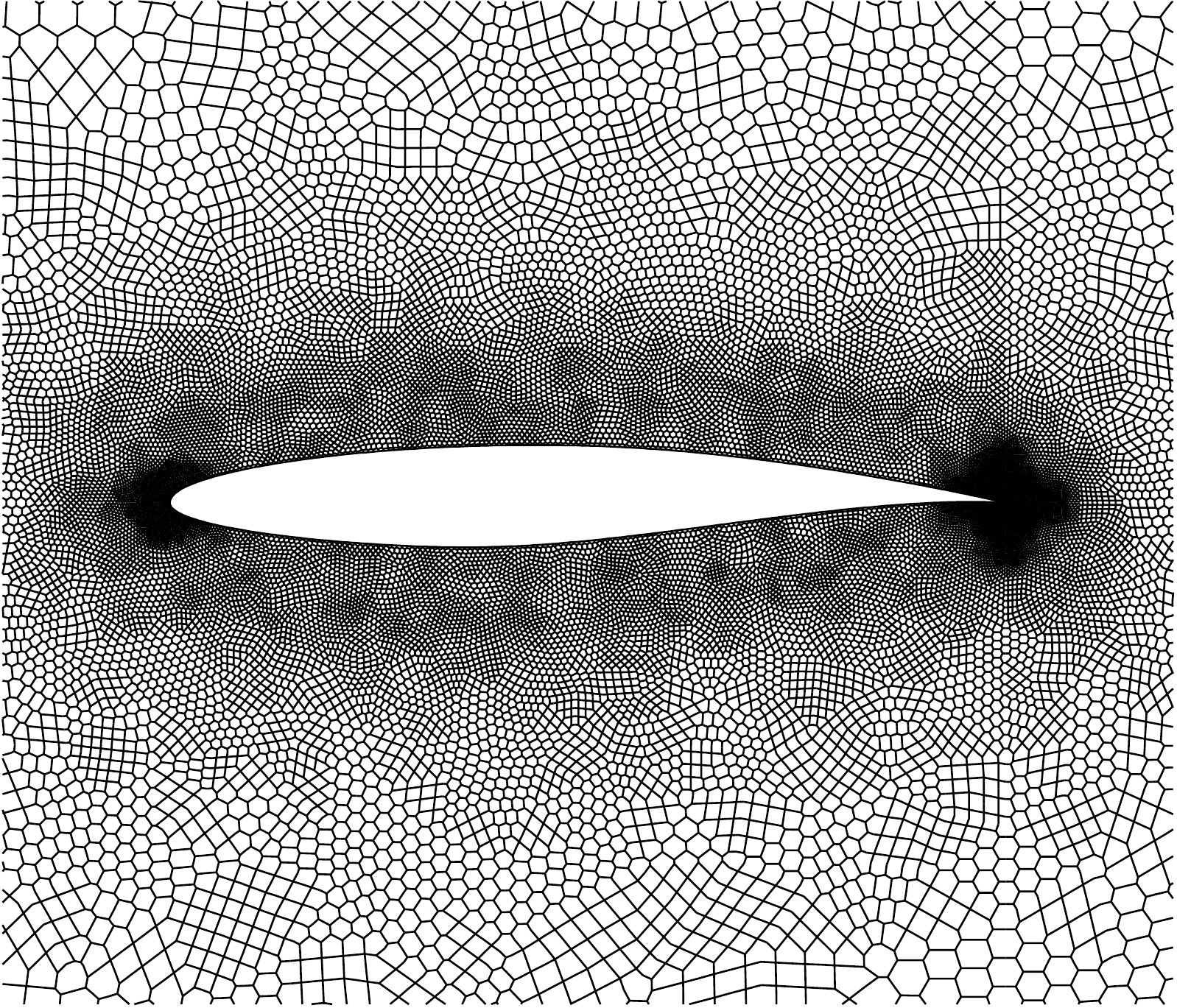}}
\caption{The shape of the RAE airfoil and the underlying primary and voronoi dual meshes.}
\label{fig:rae_mesh}
\end{figure}
\section{Numerical results}
\label{sec:num}
In the following numerical experiments, we consider the compressible Euler equations of gas dynamics as the underlying PDE \eqref{eq:cde}. In two dimensions, these equations are 
\begin{equation}
\label{eq:euler2D}
\U_t + \F^x(\U)_x + \F^y(\U)_y = 0,  \quad
\U = \begin{pmatrix}
\rho \\
\rho u \\
\rho v\\
E
\end{pmatrix}, \quad \F^x(\U) = \begin{pmatrix}
\rho u \\
\rho v^2 + p \\
\rho uv \\
(E + p) u
\end{pmatrix}, \quad \F^y(\U) = \begin{pmatrix}
\rho v \\
\rho uv \\
\rho v^2 + p \\
(E + p) v
\end{pmatrix}
\end{equation}
where $\rho, \vel = (u,v)$ and $p$ denote the fluid density, velocity and pressure, respectively. The quantity $E$ represents the total energy per unit volume
\[
E = \frac{1}{2} \rho |\vel|^2 + \frac{p}{\gamma -1},
\]
where $\gamma=c_p/c_v$ is the ratio of specific heats, chosen as $\gamma=1.4$ for our simulations. Additional important variables associated with the flow include the speed of sound $a = \sqrt{\gamma p/\rho}$, the Mach number $M=|\vel|/a$ and the fluid temperature $T$ evaluated using the ideal gas law $p = \rho R T$, where $R$ is the ideal gas constant.
\subsection{Flow past an airfoil}
\label{sec:aero}
\subsubsection{Problem description.}
We consider flow past an RAE 2822 airfoil, whose shape is shown in Figure \ref{fig:rae_mesh} (Left). The flow past this airfoil is a benchmark problem for UQ in aerodynamics \cite{UMRIDA}. The upper and lower surface of the airfoil are denoted at $(x,S^U(x))$ and $(x,S^L(x))$, with $x \in [0,1]$. We consider the following parametrized (free-stream) initial conditions and perturbed airfoil shapes
\begin{equation}
\label{eq:pert}
\begin{aligned}
T^\infty(y) &= 1 + \eps_1 G_1(y), \quad M^\infty(y) = 0.729(1 + \eps_1 G_2(y)), \quad \alpha(y)=[2.31(1+G_3(y))]^\circ,\\  
p^\infty(y) &= 1 + \eps_1 G_4(y), \quad \hat{S}^U(x;y) = S^U(x)(1+\eps_2G_5(y)), \quad \hat{S}^L(x;y) = S^L(x)(1+\eps_2G_6(y)),\\
\end{aligned}
\end{equation}
where $\alpha$ is the angle of attack, the gas constant $R=1$, $\eps_1 = 0.1$, $\eps_2 = 0.2$, $y \in Y = [0,1]^6$ is a parameter and $G_k(y) = 2y_k-1$ for $k=1,...,6$. 

The total drag and lift experienced on the airfoil surface $\hat{S}(y) = \hat{S}^L(y) \bigcup \hat{S}^U(y)$ are chosen as the observables for this experiment. These are expressed as
\begin{equation}
\text{Lift} = \EuScript{L}_1(y) = \frac{1}{\EuScript{K}(y)}\int_{\hat{S}(y)} p(y) (\n(y) \cdot \psi_l(y)) ds, \quad \text{Drag} = \EuScript{L}_2(y) = \frac{1}{\EuScript{K}(y)}\int_{\hat{S}(y)} p(y) (\n(y) \cdot \psi_d(y)) ds,
\end{equation}
where $\EuScript{K}(y) = \rho^\infty(y) |\vel^\infty(y)|^2/2$ is the free-stream kinetic energy, while $\psi_l(y) = [-\sin(\alpha(y)),\ \cos(\alpha(y))]$ and $\psi_d(y) = [\cos(\alpha(y)), \ \sin(\alpha(y))]$.
\begin{table}[htbp]
\centering
\begin{tabular}{|c|c|c|}
\hline
\textbf{Layer}          & \textbf{\begin{tabular}[c]{@{}c@{}}Width \\ (Number of Neurons)\end{tabular}} & \textbf{Number of parameters} \\ \hline
Hidden Layer 1 & 12                                                                           & 84                           \\ \hline
Hidden Layer 2 & 12                                                                           & 156                         \\ \hline
Hidden Layer 3 & 10                                                                            & 130                          \\ \hline
Hidden Layer 4 & 12                                                                            & 132                           \\ \hline
Hidden Layer 5 & 10                                                                            & 130                          \\ \hline
Hidden Layer 6 & 12                                                                           & 156                         \\ \hline
Hidden Layer 7 & 10                                                                            & 130                          \\ \hline
Hidden Layer 8 & 10                                                                            & 110                          \\ \hline
Hidden Layer 9 & 12                                                                            & 132                           \\ \hline
Output Layer   & 1                                                                             & 13                            \\ \hline
                         &                                                                               & \textbf{1149}                  \\ \hline
\end{tabular}
\caption{Reference neural network architecture for the flow past airfoil problem.}
\label{tab:afoilRefNet}
\end{table}

The domain is discretized using triangles to form the primary mesh, while the control volumes are the voronoi dual cells formed by joining the circumcenters of the triangles to the face mid-points. A zoomed view of the primary and dual meshes are shown in Figure \ref{fig:rae_mesh}. The solutions are approximated using a second-order finite volume scheme, with the mid-point rule used as the quadrature to approximate the cell-boundary integrals. The numerical flux is chosen as the kinetic energy preserving entropy stable scheme \cite{RCFM}, with the solutions reconstructed at the cell-interfaces using a linear reconstruction with the minmod limiter . For details on flux expression and the reconstruction procedure, we refer interested readers to \cite{RCFM}. The scheme is integrated using the SSP-RK3 time integrator. The contour plots for the Mach number with two different realizations of the parameter $y \in [0,1]^6$ are shown in Figure \ref{fig:sample_rae}. It should be emphasized that the stochastic perturbations in \eqref{eq:pert} are rather strong, representing on an average a $10-20$ percent standard deviation (over mean). Hence, there is a significant statistical spread in the resulting solutions. As shown in figure \ref{fig:sample_rae}, one of the configurations lead to a transsonic shock around the airfoil while another leads to a shock-free flow around the airfoil. 
\begin{figure}
\subfigure[Sample 1]{\includegraphics[width=0.45\textwidth]{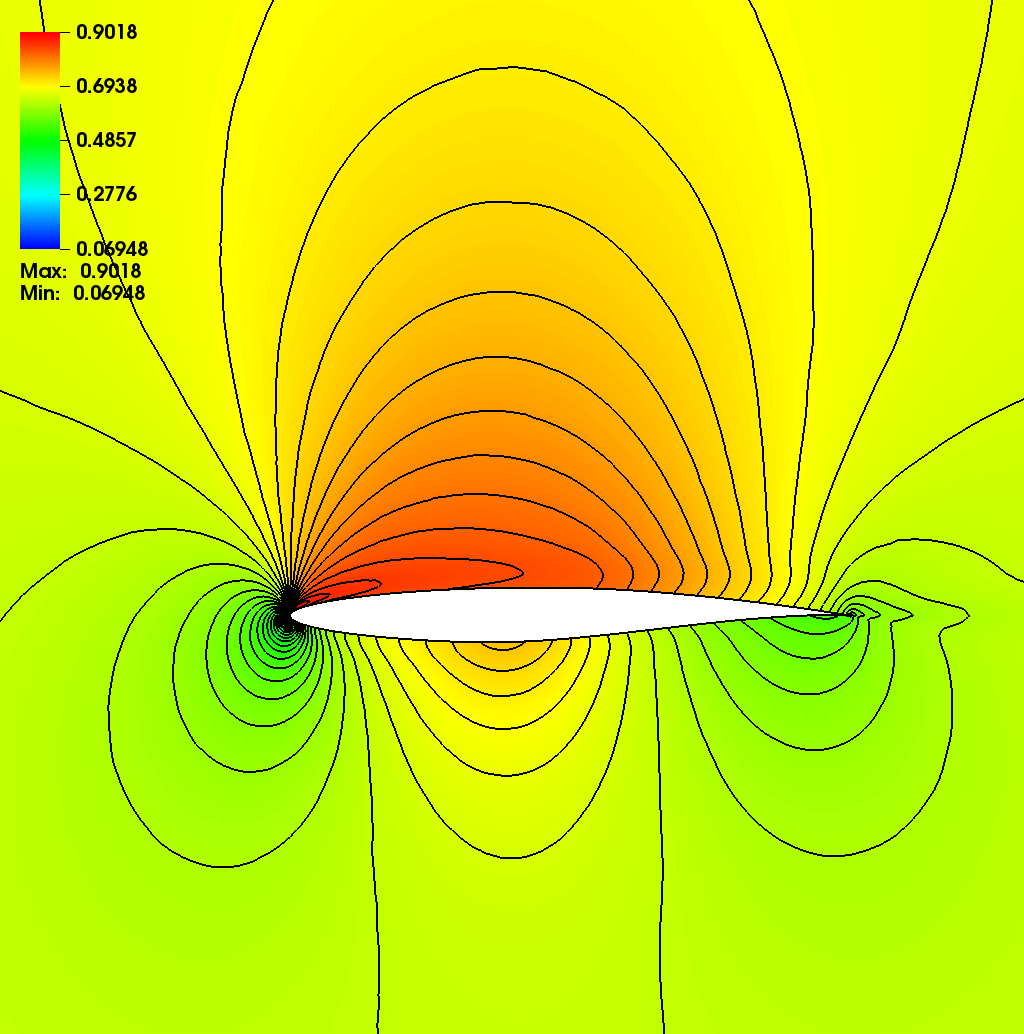}}
\subfigure[Sample 2]{\includegraphics[width=0.45\textwidth]{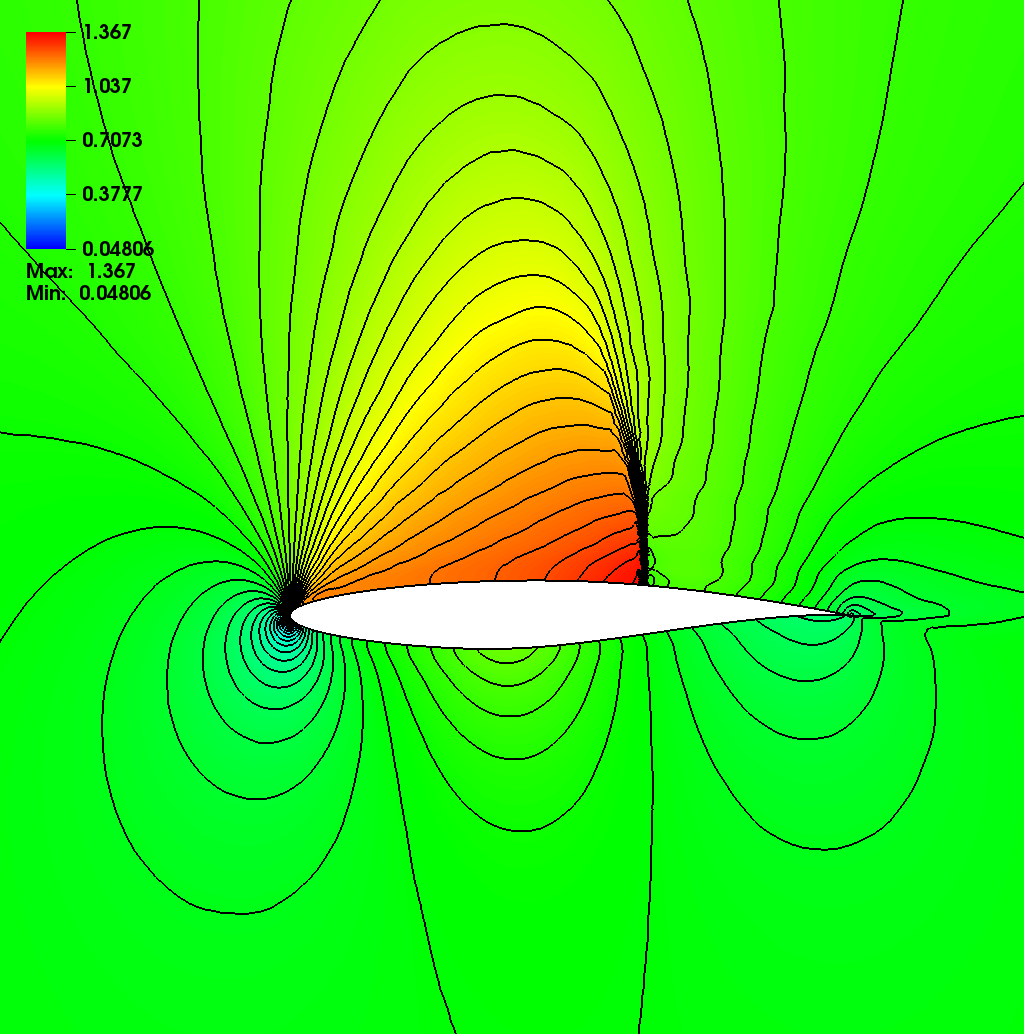}}
\caption{Numerical solutions (Mach number) for the flow past an RAE airfoil, with two different realizations of the parameter y.}
\label{fig:sample_rae}
\end{figure}
\subsubsection{Generation of training data.}
In order to generate the training, validation and test sets, we denote $\EuScript{Q}^{sob}$ as the first $1001$ Sobol points on the domain $[0,1]^6$. For each point $y_j \in\EuScript{Q}^{sob} $, the maps or (rather their numerical surrogate) $\map_{1,2}^{\Delta}(y_j)$ is then generated from high-resolution numerical approximations $\U^\Delta(y)$  and the corresponding lift and drag are calculated by a numerical quadrature . This forms the test set $\test$. For this numerical experiment, we take the first $128$ Sobol points to generate the training set $\train$ and the next $128$ points to generate the validation set $\val$.
\subsubsection{Results of the Ensemble training procedure.}
To set up the ensemble training procedure of the previous section, we select a fully connected network, with number of layers and size of each layer listed in table \ref{tab:afoilRefNet}. Our network has $1149$ tunable parameters and this choice is clearly consistent with the prescriptions of Lemma \ref{lem:1}. Varying the hyperparameters as described in section \ref{sec:imp} led to a total of $114$ configurations and each was \emph{retrained} $5$ times with different random initial choices of the weights and the biases. Thus, we trained a total of $570$ networks in parallel. 

For each configuration, we select the retraining that minimizes the set selection criteria for the configuration. Once the network with best retraining is selected, it is evaluated on the whole test set $\test =  \EuScript{Q}^{sob}$ and the percentage relative $L^2$ prediction error \eqref{eq:perr} computed. Histograms, describing the probability distribution over all hyperparameter samples for the lift and the drag are shown in figure \ref{fig:afoilHistErr}. From this ensemble, we choose the hyperparameter configuration that led to the smallest prediction error and list them in table \ref{tab:afoilBPNets}. As seen from this table, there is a difference in the best performing network corresponding to lift and to drag. However, both networks use ADAM as the optimizer and regularize with the $L^2$-norm of the weights in the loss function \eqref{eq:lf2}.
\begin{table}[htbp]
\centering
\begin{tabular}{|l|l|l|l|l|l|l|}
\hline
Obs & Opt & Loss & $L^1$-reg & $L^2$-reg & Selection & \emph{BP}. Err mean (std)  \\
\hline
Lift & ADAM & MSE & $0.0$ & $7.8\times 10^{-6}$ & wass-train & $0.786$ ($0.010$) \\
\hline
Drag & ADAM & MAE & $0.0$ & $7.8 \times 10^{-6}$ & mean-train & $1.847$ ($0.022$)  \\
\hline 
\end{tabular}
\caption{The hyperparameter configurations that correspond to the best performing network (one with least mean prediction error) for the two observables of the flow past airfoils.}
\label{tab:afoilBPNets}
\end{table}
\begin{figure}[htbp]
\subfigure[Lift]{\includegraphics[width=0.5\textwidth]{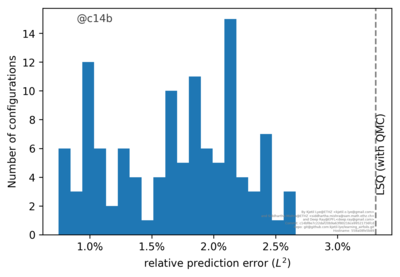}} 
\subfigure[Drag]{\includegraphics[width=0.5\textwidth]{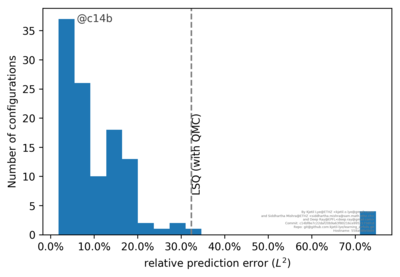}}
\caption{Histograms depicting distribution of the percentage relative mean $L^2$ prediction error \eqref{eq:perr} (X-axis) over number of hyperparameter configurations (samples, Y-axis)  for the lift and the drag in the flow past airfoil problem. Note that the range of $X$-axis is different in the plots.}
\label{fig:afoilHistErr}
\end{figure}
\begin{figure}[htbp]
\subfigure[Lift, Best Performing]{\includegraphics[width=0.45\textwidth]{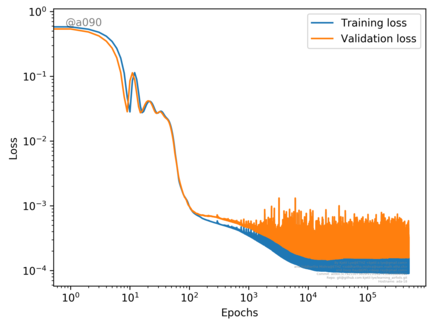}} 
\subfigure[Drag, Best Performing]{\includegraphics[width=0.45\textwidth]{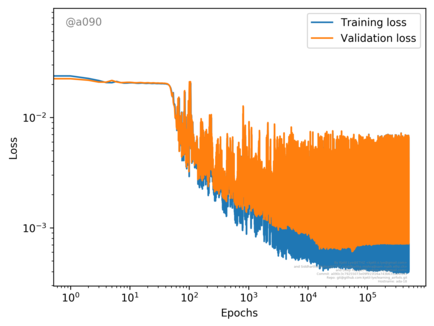}}  \\
\subfigure[Lift, Best Performing]{\includegraphics[width=0.45\textwidth]{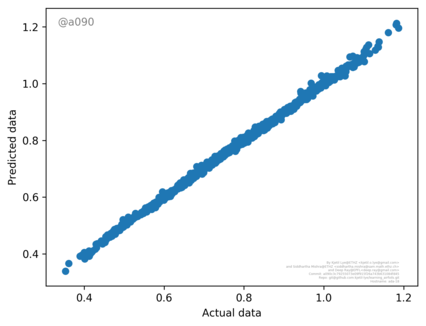}} 
\subfigure[Drag, Best Performing]{\includegraphics[width=0.45\textwidth]{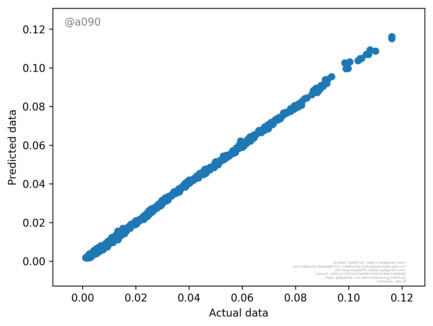}} 
\caption{Training loss \eqref{eq:ert} and Validation loss \eqref{eq:vlf} (Top) and prediction errors (Ground truth vs. prediction) (Bottom) for the best performing (table \ref{tab:afoilBPNets}) neural networks for the flow past airfoils problem as a function of the number of
epochs.}
\label{fig:afoiltrpr}
\end{figure}
We plot the training (and validation) loss, for these best performing networks with respect to the number of epochs of the ADAM algorithm in figure \ref{fig:afoiltrpr} (Top) and observe a three orders (for lift) and two orders (for drag) reduction in the underlying loss function during training.

From figure \ref{fig:afoiltrpr} (Bottom) and table \ref{tab:afoilBPNets}, we note that the prediction errors for the best performing networks are very small i.e, less than $1\%$ relative error for the lift and less than $2\%$ relative error for the drag, even for the relatively small number ($128$) of training samples. This is orders of magnitude less than the sum of sines regression problem (table \ref{tab:sine}). These low prediction errors are particularly impressive when viewed in term of the theory presented in section \ref{sec:theo}. A crude upper bound on the generalization error is given by $\sqrt{\frac{M}{N}}$, with $M,N$ being the number of parameters in the network and number of training samples respectively. Setting these numbers for this problem yields a relative error of approximately $300 \%$. However, our prediction errors are $100-200$ times (two orders of magnitude) less, demonstrating significant \emph{compression} in the trained networks \cite{Arora}. These results are further contextualized, when considered with reference to the computational costs, shown in table \ref{tab:afoilcost}. In this table, we present the costs of generating a single sample, the average training time (over a subset of hyperparameter configurations) and the time to evaluate a trained network. Each sample is generated on the EULER high-performance cluster of ETH-Zurich. In this case, the training time is a very small fraction of the time to generate a single sample, let alone the whole training set whereas the computational cost of evaluating the neural networks is almost negligible as it is nine orders of magnitude less than evaluating a single training sample. 
\begin{table}[]
\centering
\begin{tabular}{|c|c|}
\hline
\textbf{}           & Time (in secs) \\
\hline
Sample generation & $24000$ \\%23900 \\
\hline
Training (Lift) & $700$\\%$716.00$ \\
\hline
Evaluation ($\map_1$) & $9\times 10^{-6}$ \\ %$8.82\times 10^{-6}$ \\
\hline
Training (Drag) & $840$\\%$835.48$ \\
\hline
Evaluation ($\map_2$) & $10^{-5}$\\%$1.29\times 10^{-5}$ \\
\hline
\end{tabular}
\caption{Computational times (cost) of (single) sample generation, network training and (single) network evaluation, for the flow past airfoils problem. Each sample was generated using a parallelized finite-volume solver on 16 Intel(R) Xeon(R) Gold 5118 @ 2.30GHz processor cores. The entire run took $1500$ secs on the cluster and translates into a total wall clock time of $1500 \times 16 = 24000$ secs. The training and evaluation of the neural networks were all performed on a Intel(R) Core(TM) i7-8700K CPU @ 3.70GHz machine. The training and evaluation times are approximations of the average runtimes over all hyperparameter configurations.}
\label{tab:afoilcost}
\end{table}
A priori, it is unclear if the underlying problem has some special features that makes the compression, and hence high accuracy possible. In particular, learning an affine map is relatively straightforward for a neural network. To test this possibility, we approximate the underlying parameters to observable map, for both lift and drag, with classical \emph{linear least squares algorithm} i.e, $\map^{\text{lsq}}_j:\R^{6}\to \R$ be affine map defined by
\begin{equation}
\label{eq:llsq}
\map^{\ast,\text{lsq}}_j=\arg \min \left\{\sum_{y\in\train} (f(y)-\map_j(y))^2\mid f:\R^6\to \R\text{ is affine}\right\},\quad j=1,2.
\end{equation}
The resulting convex optimization problem is solved by the standard gradient descent. The errors with the linear least squares algorithm are shown in figure \ref{fig:afoilHistErr}. From this figure we can see that the prediction error with the linear least squares is approximately $4 \%$ for lift and $36 \%$ for drag. Hence, the best performing neural networks are approximately $5$ times more efficient for lift and $20$ times more efficient for drag, than the linear least squares algorithm. The gain over linear least squares implicitly measures how \emph{nonlinear} the underlying map is. Clearly in this example, the drag is more nonlinear than the lift and thus more difficult to approximate.
\subsubsection{Network sensitivity to hyperparameters.}
We summarize the results of the sensitivity of network performance to hyperparameters below,
\begin{itemize}
\item \emph{Overall sensitivity.} The overall sensitivity to hyperparameters is depicted as histograms, representing the distribution, over samples (hyperparameter configurations), shown in
figure \ref{fig:afoilHistErr}. As seen from the figure, the prediction errors with different hyperparameters are spread over a significant range for each observable. There are a few outliers which perform very poorly, particularly for the drag. On the other hand,  a large number of  hyperparameter configurations concentrate around the best performing network. Moreover all (most) of the hyperparameter choices led to better prediction errors for the lift (drag), than the linear least squares algorithm.  
\item \emph{Choice of optimizer.} The difference between ADAM and the standard SGD algorithm is shown in figure \ref{fig:afoilOpt}. As in the Sod shock tube problem, ADAM is clearly superior to the standard SGD algorithm. Moreover, there are a few outliers with the ADAM algorithm that perform very poorly for the drag. We call these outliers as the \emph{bad set of ADAM} and they correspond to the configurations with MSE loss function and $L^1$ regularization, or $L^2$ regularization, with a regularization parameter $\lambda > 10^{-5}$, or MAE loss function without any regularization. The difference between the bad set of ADAM and its complement i.e, the good set of ADAM is depicted in figure \ref{fig:afoilAdam}. 
\item \emph{Choice of loss function.} The difference between $L^1$ mean absolute error (MAE) and $L^2$ root mean square error (MSE) i.e, exponent $p=1$ or $p=2$ in the loss function \eqref{eq:lf1} is plotted in the form of a histogram over the error distributions in figure \ref{fig:afoillf}. The difference in performance, with respect to choice of loss function is minor. However, there is an outlier, based on MSE, for the drag.
\item \emph{Choice of type of regularization.} The difference between choosing an $L^1$ or $L^2$ regularization term in \eqref{eq:lf2} is shown in figure \ref{fig:afoilreg}. Again, the differences are minor but the $L^2$ regularization seems to be slightly better, particularly for the drag. 
\item \emph{Value of regularization parameter.} The variation of network performance with respect to the value of the regularization parameter $\lambda$ is shown in figure \ref{fig:afoilregval}. We observe from this figure that a little amount of regularization works better than no regularization. However, the variation in prediction errors wrt respect to the non-zero values of $\lambda$ is minor. 
\item \emph{Choice of selection criteria.} The distribution of prediction errors with respect to the choice of selection criteria for retraining, namely \emph{train,val,mean-train,wass-train} is shown in figure \ref{fig:afoilselect}. There are very minor differences between \emph{mean-train} and \emph{wass-train}. On the other hand, both these criteria are clearly superior to the other two selection criteria. 
\item \emph{Sensitivity to retrainings.} The variation of performance with respect to retrainings i.e, different random initial starting values for ADAM, is provided in table \ref{tab:afoilretrain}. In this table, we list the minimum, maximum, mean and standard deviation of the relative prediction error, over $5$ retrainings, for the best performing networks for each observable. As seen from this table, the sensitivity to retraining is rather low for the lift. On the other hand, it is more significant for the drag.
\item \emph{Variation of Network size.} We consider the dependence of performance with respect to variation of network size by varying the width and depth for the best performing networks. The resulting prediction errors are shown in table \ref{tab:afoilNS} where a $3 \times 3$ matrix for the errors (rows represent width and columns depth) is tabulated. All the network sizes are consistent with the requirements of lemma \ref{lem:1}. As observed from the table, the sensitivity to network size is greater for the drag than for the lift. It appears that increasing the width for constant depth results in lower errors for the drag. On the other hand, training clearly failed for some of the larger networks as the number of training samples appears to be not large enough to train these larger networks.
\end{itemize}
\subsubsection{Network performance with respect to number of training samples.}
 All the above results were obtained with $N=128$ training samples. We have repeated the entire ensemble training with different samples sizes i.e $N=32,64,256$ also and present a summary of the results in figure \ref{fig:afoilerrvsamp}, where we plot the mean (percentage relative) $L^2$ prediction error with respect to the number of samples. For each sample size, three different errors are plotted, namely the minimum (maximum) prediction errors and the mean prediction error over a subset of the hyperparameter space, that omits the \emph{bad set of ADAM} as the optimizer. Moreover, we only consider errors with selected (optimal) retrainings in each case. As seen from the figure, the prediction error \emph{decreases with sample size $N$} as predicted by the theory. The decay seems of the of form $\left(\frac{U_i}{N}\right)^{\alpha_i}$ with $\alpha_1=0.81$ and $U_1 = 0.77$ for the lift and $\alpha_2 = 0.92$ and $U_2 = 3.42$ for the drag. In both cases, the decay of error with respect to samples, is at a higher rate than that predicted by the theory in \eqref{eq:gerr2}. Moreover, the constants are much lower than the number of parameters in the network. Both facts indicate a very high degree of compressibility (at least in this possible pre-asymptotic regime) for the trained networks and explains the unexpectedly low prediction errors. 
\begin{figure}[htbp]
\subfigure[Lift]{\includegraphics[width=0.48\textwidth]{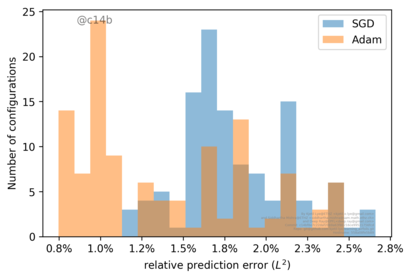}} 
\subfigure[Drag]{\includegraphics[width=0.48\textwidth]{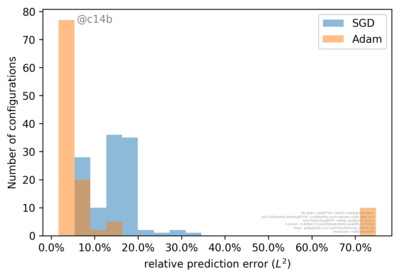}}
\caption{Histograms for the relative prediction errors comparing the ADAM and SGD algorithms on the flow past airfoils.}
\label{fig:afoilOpt}
\end{figure}
\begin{figure}[htbp]
\subfigure[Lift]{\includegraphics[width=0.48\textwidth]{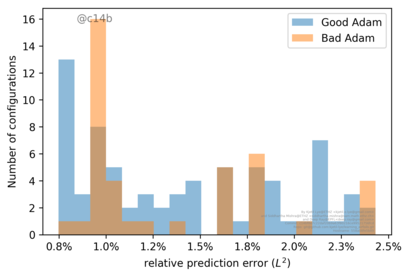}}
\subfigure[Drag]{\includegraphics[width=0.48\textwidth]{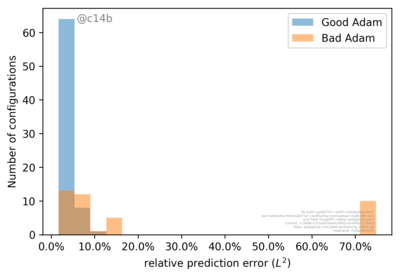}}
\caption{Histograms of the relative prediction errors, comparing \emph{Good} ADAM and \emph{bad} ADAM, for the flow past airfoils.}
\label{fig:afoilAdam}
\end{figure}

\begin{figure}[htbp]
\subfigure[Lift]{\includegraphics[width=0.48\textwidth]{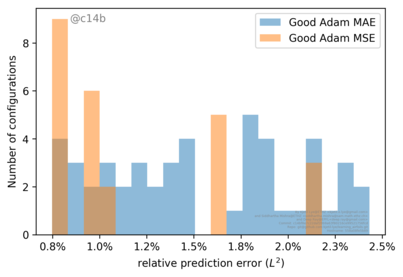}}
\subfigure[Drag]{\includegraphics[width=0.48\textwidth]{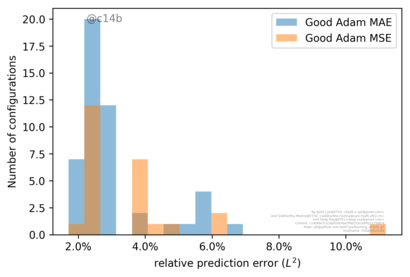}}
\caption{Histograms for the relative prediction errors, comparing the mean absolute error and mean square error loss functions, for the flow past airfoils}
\label{fig:afoillf}
\end{figure}
\begin{figure}[htbp]
\subfigure[Lift]{\includegraphics[width=0.48\textwidth]{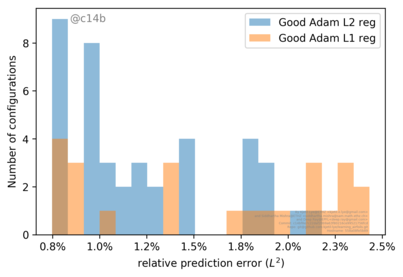}}
\subfigure[Drag]{\includegraphics[width=0.48\textwidth]{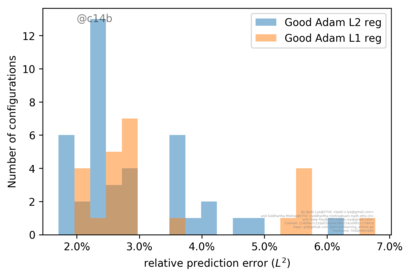}}
\caption{Histograms for the relative prediction error, comparing $L^1$ and $L^2$ regularizations of the loss function, for the flow past airfoils}
\label{fig:afoilreg}
\end{figure}
\begin{figure}[htbp]
\subfigure[Lift]{\includegraphics[width=0.32\textwidth]{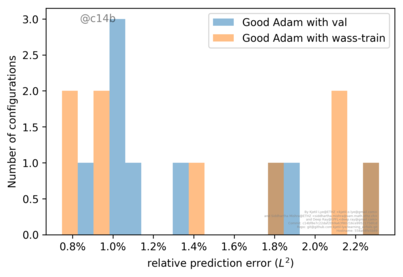}} 
\subfigure[Lift]{\includegraphics[width=0.32\textwidth]{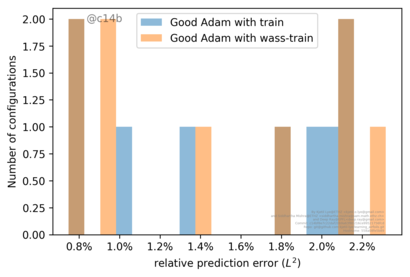}}
\subfigure[Lift]{\includegraphics[width=0.32\textwidth]{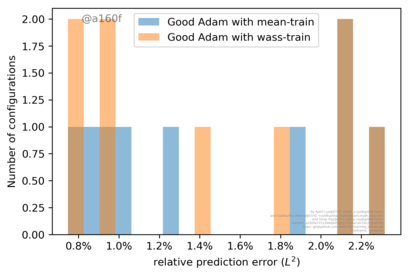}}
\subfigure[Drag]{\includegraphics[width=0.32\textwidth]{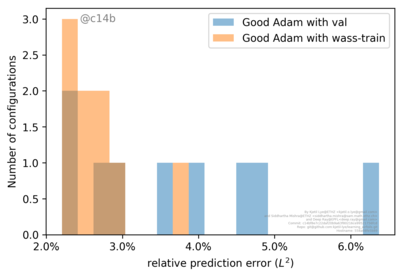}} 
\subfigure[Drag]{\includegraphics[width=0.32\textwidth]{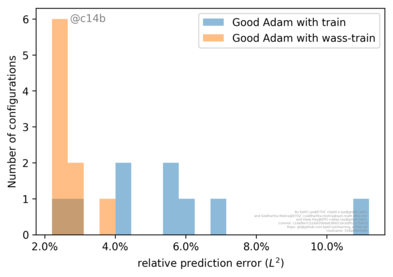}}
\subfigure[Drag]{\includegraphics[width=0.32\textwidth]{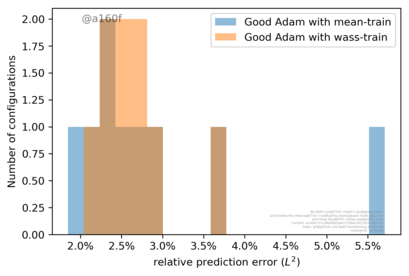}}
\caption{Network performance with respect to selection criteria for retrainings (see section \ref{sec:imp} i.e, \emph{train,val,mean-train,wass-train} for the flow past airfoils problem. We compare \emph{wass-train} with each of the other three criteria. Histograms for prediction error (X-axis) with number of configurations (Y-axis) are shown.}
\label{fig:afoilselect}
\end{figure}
\begin{table}[htbp]
\centering
\begin{tabular}{|l|l|l|l|l|}
\hline
Obs & Err (min) & Err (max) & Err (mean) & Err (std) \\
\hline
Lift  & $0.786$ & $1.336$ & $1.004$ & $0.187$ \\
\hline
Drag & $1.847$ & $8.016$ & $3.841$ & $2.332$ \\
\hline
\end{tabular}
\caption{Sensitivity of the best performing networks (listed in table \ref{tab:afoilBPNets}) to retrainings i.e starting values for ADAM for flow past airfoil. All errors are relative mean $L^2$ prediction error \eqref{eq:perr} in percentage and we list the minimum, maximum, mean and standard deviation of the error over $5$ retrainings.}
\label{tab:afoilretrain}
\end{table}
\begin{table}[htbp]
\centering
\begin{tabular}{|c|c|c|c|}
\hline
Width/Depth  & $6$  & $12$ & $24$ \\
\hline
$4$      & $\left[1.21,4.37 \right]$ &  $\left[0.83,1.74\right]$ &  $\left[1.09,2.48\right]$ \\
\hline 
$8$      & $\left[0.84,3.66 \right]$ &  $\left[0.86,2.02\right]$ &  $\left[0.81,1.69\right]$ \\
\hline
$16$      & $\left[0.80,\neg \right]$ &  $\left[\neg, 74.72\right]$ &  $\left[\neg,5.11\right]$ \\
\hline
\end{tabular}
\caption{Performance of trained neural networks with respect to the variation of network size for the flow past airfoils problem. The rows represent variation with respect to Width (number of neurons per hidden layer) and the 
columns represent variation with respect to Depth (number of hidden layers). For each entry of the Width-Depth matrix, we tabulate the vector of relative percentage mean prediction error for the best performing networks. The components of each vector represent error in $\left[{\rm Lift}, {\rm Drag} \right]$ and $\neg$ is used to indicate that the training procedure failed for this particular configuration.}
\label{tab:afoilNS}
\end{table}
\begin{figure}[htbp]
\subfigure[Lift]{\includegraphics[width=0.5\textwidth]{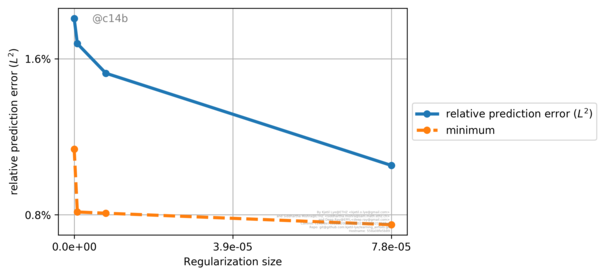}} 
\subfigure[Drag]{\includegraphics[width=0.5\textwidth]{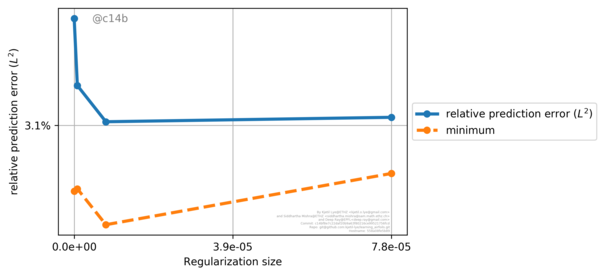}}
\caption{Variation of prediction error (Y-axis) with respect to the size of the regularization parameter $\lambda$ in \eqref{eq:lf2} (X-axis) for the flow past airfoils problems. The minimum and mean of prediction error (over all hyperparameter configurations) is shown}
\label{fig:afoilregval}
\end{figure}
\begin{figure}[htbp]
\subfigure[Lift]{\includegraphics[width=0.48\textwidth]{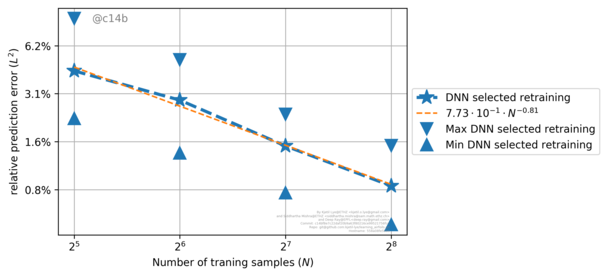}} 
\subfigure[Drag]{\includegraphics[width=0.48\textwidth]{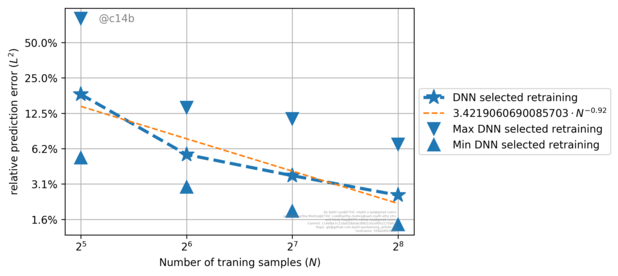}}
\caption{Relative percentage mean square prediction error \eqref{eq:perr} (Y-axis) for the neural networks approximating the flow past airfoils problem, with respect to number of training samples (X-axis). For each number of training samples, we plot the mean, minimum and maximum (over hyperparameter configurations) of the prediction error. Only the selected (optimal) retraining is shown. }
\label{fig:afoilerrvsamp}
\end{figure}
\subsubsection{UQ with deep learning}
Next, we approximate the underlying probability distributions (measures) \eqref{eq:pf1}, with respect to the lift and the drag in the flow past airfoils problem with the DLQMC algorithm \ref{alg:dlqmc}. A reference measure is computed from the whole test set of $1001$ samples and the corresponding histograms are shown in figure \ref{fig:afoilhistcomp} to visualize the probability distributions. In the same figure, we also plot corresponding histograms, computed with the best performing networks for each observable (listed in table \ref{tab:afoilBPNets}) and observe very good qualitative agreement between the outputs of the DLQMC algorithm and the reference measure.

In order to quantify the gain in computational efficiency with the DLQMC algorithm over the baseline QMC algorithm in approximating probability distributions, we will compute the \emph{speedup}. To this end, we observe from figure \ref{fig:afoilQMCcomp} that the baseline QMC algorithm converges to the reference probability measure with respect to the Wasserstein distance \eqref{eq:m2}, at a rate of $\alpha=0.81$, for both the lift and drag. Next, from table \ref{tab:afoilcost}, we see that once the training samples have been generated, the cost of training the network and performing evaluations with the trained network is essentially free. Consequently, if the number of training samples is $N$, we compute a \emph{raw speed-up} (for each observable) given by,
\begin{equation}
\label{eq:rsp1}
\sigma_{i,dlqmc}^{raw} := \frac{W_1\left(\hat{\mu}_{i,qmc,N},\hat{\mu_i}^{\Delta} \right)}{W_1\left(\hat{\mu}^{\ast}_{i,qmc},\hat{\mu_i}^{\Delta} \right)}.
\end{equation}
Here for $i={\rm Lift},~{\rm Drag}$, $\hat{\mu}_i^{\Delta}$ is the reference measure (probability distribution) computed from the test set $\test$, $\hat{\mu}_{i,qmc,N}$ is the measure \eqref{eq:m1} with the baseline QMC algorithm and $N$ QMC points, and $\hat{\mu}^{\ast}_{i,qmc}$ is the measure computed with the DLQMC algorithm \eqref{eq:dlqmc1}. The real speed up is then given by,
\begin{equation}
\label{eq:esp1}
\sigma_{i,dlqmc}^{real} = \left(\sigma_{i,dlqmc}^{raw}\right)^{\frac{1}{\alpha_i}},
\end{equation}
in order to compensate for the rate of convergence of the baseline QMC algorithm. In other words, our \emph{real speed up} compares the costs of computing errors in the Wasserstein metric, that are of the same magnitude, with the DLQMC algorithm vis a vis the baseline QMC algorithm. Henceforth, we denote $\sigma_{dlqmc}^{real}$ as the \emph{speedup} and list the speedups (for each observable) with the DLQMC algorithm for $N=128$ training samples in table \ref{tab:afoilrsp}. We observe from this table that  we obtain speedups ranging between half an order to an order of magnitude for the lift and the drag. The speedups for drag are slightly better than that for lift. 

In figure \ref{fig:afoilrspvsN}, we plot the speedup of the DLQMC algorithm (over the baseline QMC algorithm) as a function of the number of training samples in the deep learning algorithm \ref{alg:DL}. As seen from the figure, the best speed ups are obtained in case of $64$ training samples for the lift and $128$ training samples for the drag. This is on account of a complex interaction between the prediction error which decreases (rather fast) with the number of samples and the fact that the errors with the baseline QMC algorithm also decay with an increase in the number of samples. 
\subsubsection{Comparison with Monte Carlo algorithms}
We have computed a Monte Carlo approximation of the probability distributions of the lift and the drag by randomly selecting $N=128$ points from the parameter domain $Y = [0,1]^6$ and computing the probability measure $\hat{\mu}_{mc} = \frac{1}{N} \sum_{j=1}^N \delta_{\map^{\Delta}(y_j)}$. We compute the Wasserstein error $W_1(\hat{\mu}_{mc}, \hat{\mu}^{\Delta})$, with respect to the reference measure $\hat{\mu}^{\Delta}$ and divide it with the error obtained with the DLQMC algorithm to obtain a \emph{raw speedup} of the DLQMC algorithm over MC. As MC errors converge as a square root of the number of samples, the \emph{real speedup} of DLQMC over MC is calculated by squaring the raw speedup. We present this real speedup over MC in table \ref{tab:afoilmcsp} and observe from that the speedups of the DLQMC algorithm over the baseline MC algorithm are very high and amount to at least two orders of magnitude. This is not surprising as the baseline QMC algorithm significantly outperforms the MC algorithm for this problem. Given that the DLQMC algorithm was an order of magnitude faster than the baseline QMC algorithm, the cumulative effect leads to a two orders of magnitude gain over the baseline MC algorithm. 
\subsubsection{Choice of training set.}
 For the sake of comparision with our choice of Sobol points to constitute the training set $\train$ in this problem, we chose $N=128$ random points in $Y$ as the training set.  With this training set, we repeated the ensemble training procedure and found the best performing networks. The relative percentage mean $L^2$-prediction error \eqref{eq:perr} (with respect to a test set of $320$ random points) was computed and is presented in table \ref{tab:MCtrain}. As seen from this table, the prediction errors with respect to the best performing networks are considerably (an order of magnitude) higher than the corresponding errors obtained for the QMC training points (compare with table \ref{tab:afoilBPNets}). Hence, at least for this problem, Sobol points are a much better choice for the training set than random points. An intuitive reason for this could lie in the fact that the Sobol points are better distributed in the parameter domain than random points. So on an average, a network trained on them will generalize better to unseen data, than it would if trained on random points. 
 \begin{figure}[htbp]
\subfigure[Lift]{\includegraphics[width=0.45\textwidth]{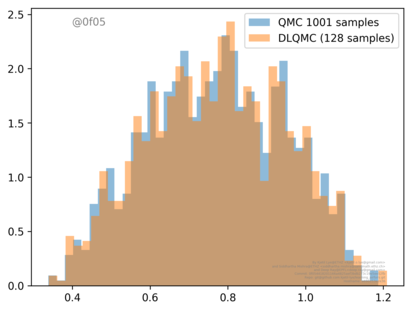}} 
\subfigure[Drag]{\includegraphics[width=0.45\textwidth]{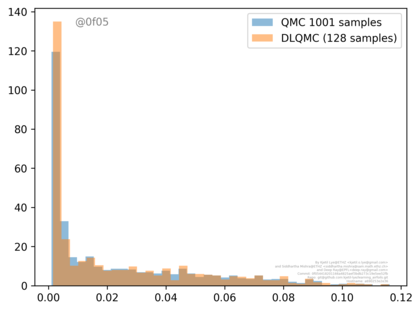}}
\caption{Empirical histograms representing the probability distribution (measure) for the lift and the drag in the flow past airfoils problem. We compare the reference histograms (computed with the test set $\test$) and the histograms computed with the DLQMC algorithm.}
\label{fig:afoilhistcomp}
\end{figure}
\begin{figure}[htbp]
	\subfigure[Lift]{\includegraphics[width=0.45\textwidth]{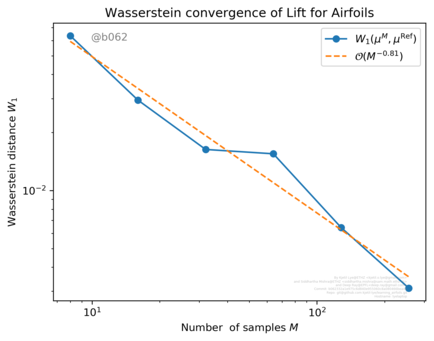}}
\subfigure[Drag]{\includegraphics[width=0.45\textwidth]{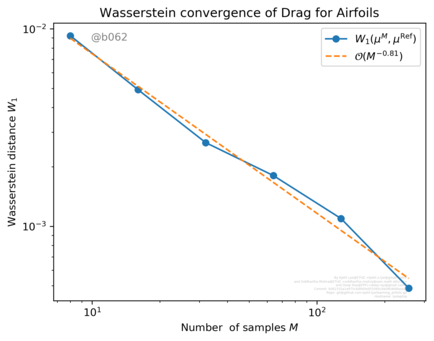}}
\caption{Convergence of the Wasserstein distance $W_1\left(\hat{\mu}_{i,qmc},\hat{\mu_i}^{\Delta} \right)$ (Y-axis) for the of the lift and the drag for the flow past airfoils, with respect to number of QMC points (X-axis)}
\label{fig:afoilQMCcomp}
\end{figure}
\begin{table}[htbp]
\centering
\begin{tabular}{|l|l|}
\hline
Observable & Speedup \\
\hline
Lift & $6.64$ \\
\hline
Drag & $8.56$ \\
\hline

\end{tabular}
\caption{Real speedups \eqref{eq:esp1}, for the lift and the drag in the flow past airfoils problem, comparing DLQMC with baseline QMC algorithm.}
\label{tab:afoilrsp}
\end{table}
\begin{table}[htbp]
\centering
\begin{tabular}{|l|c|}
\hline
Observable & Speedup over MC \\
\hline
Lift & $246.02$ \\
\hline
Drag & $179.54$ \\
\hline
\end{tabular}
\caption{Real speedups, for the lift and the drag in the flow past airfoils problem, comparing DLQMC with the baseline MC algorithm.}
\label{tab:afoilmcsp}
\end{table}
 
\begin{figure}[htbp]
\subfigure[Lift]{\includegraphics[width=0.45\textwidth]{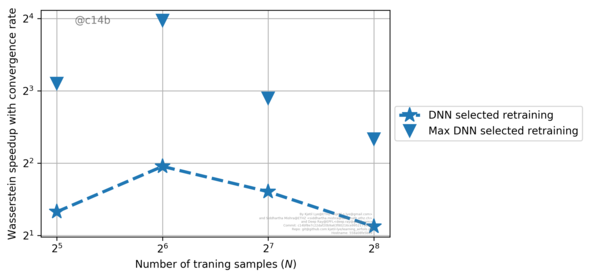}} 
\subfigure[Drag]{\includegraphics[width=0.45\textwidth]{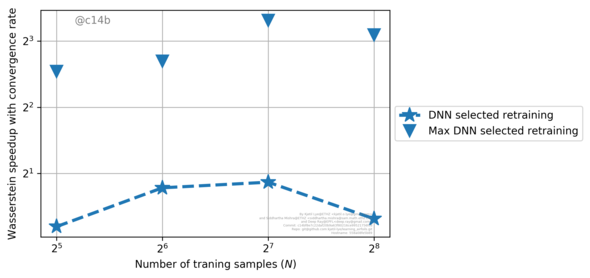}}
\caption{Real speedups \eqref{eq:esp1} for the DLQMC algorithm over the baseline QMC algorithm (Y-axis) with respect to number of training samples (X-axis) for the flows past airfoils problem. The maximum and mean speed up (over all hyperparameter configurations)  are shown.}
\label{fig:afoilrspvsN}
\end{figure}
\begin{table}[htbp]
\centering
\begin{tabular}{|l|l|l|l|l|l|l|}
\hline
Obs & Opt & Loss & $L^1$-reg & $L^2$-reg & Selection & \emph{BP}. Err mean  \\
\hline
Lift & SGD & MAE & $0.0$ & $7.8\times 10^{-7}$ & mean-train & $8.487$    \\
\hline
Drag & ADAM & MAE & $7.8 \times 10^{-7}$ &$0.0$ &train & $20.25$  \\
\hline 
\end{tabular}
\caption{The hyperparameter configurations that correspond the best performing network (one with least mean prediction error) for the two observables of the flow past airfoils, trained on random (Monte Carlo) training set.}
\label{tab:MCtrain}
\end{table}
\subsection{A stochastic shock tube problem}
As a second numerical example, we consider a shock tube problem with the one-dimensional version of the compressible Euler equations \eqref{eq:euler2D}.  

\subsubsection{Problem description}
The underlying computational domain is $[-5,5]$ and initial conditions are prescibed in terms of left state $ (\rho_L, \ u_L, \ p_L)$ and a right state $ (\rho_R, \ u_R, \ p_R)$ at the initial discontinuity $x_0$. These states are defined as,
\begin{equation}
    \label{eq:stin}
    \begin{aligned}
\rho_L&= 0.75+0.45G_1(y) \quad &\rho_R = 0.4 + 0.3G_2(y) \\
w_L&= 0.5+0.5G_3(y) \quad &w_R=0\\ 
p_L &=2.5+1.6G_4(y) \quad &p_R= 0.375+0.325G_5(y) \\
x_0 &= 0.5G_6(y).
\end{aligned}
\end{equation}
$y \in Y = [0,1]^6$ is a parameter and $G_k(y) = 2y_k-1$ for $k=1,...,6$. We use the Lebesgue measure on $Y$ as the underlying measure. 

The observables for this problem are chosen as the average integral of the density over fixed intervals
\begin{equation}\label{eq:sod_obs}
\map_j(y) = \frac{1}{|I_j|}\int_{I_j} \rho(x,T_f;y) dx, \quad j =1,2,3,
\end{equation}
where $I_1 = [-1.5,-0.5], I_2 = [0.8,1.8], I_3 = [2,3]$ are the underlying regions of interest. 

A priori, this problem appears to be simpler as it is only one-dimensional when compared to the two-dimensional airfoil. However, the difficulty stems from the \emph{large variance} of the initial data. As seen from \eqref{eq:stin}, the ratio of the standard deviation to the mean, for the underlying variables, is very high and can be as high as $100 \%$. This is almost an order of magnitude larger than the corresponding initial stochasticity of the flow past airfoil problem and we expect this large initial variance to be propagated into the solution. This is indeed borne out by the results shown in figure \ref{fig:STinit}, where we plot the mean (and standard deviation) of the density, both initially and at time $T_f = 1.5$. Moreover, the initial stochastic inputs span a wide range of possible scenarios. Particular choices of the parameters i.e, $(y_1,y_2,y_3,y_5,y_6) = (7/9,1/24,0,1/32,1/13,0)$ and $(y_1,y_2,y_3,y_5,y_6) = (0.1611,2/3,0.698,0.8213,0.8015,0)$, correspond to the classical Sod shock tube and Lax shock tube problems \cite{FMT}. Thus, we can think of this stochastic shock tube problem as more or less generic (scaled) Riemann problem for the Euler equations.

In a recent paper \cite{LMM1}, the authors have analyzed the role of the underlying variance in the generalization errors for deep neural networks trained to regress functions. In particular, high variance can lead to large generalization errors. In this sense, this shock tube problem might be harder to predict with neural networks than the flow past airfoil. 

\begin{figure}[htbp]
\subfigure[$T=0$]{\includegraphics[width=0.45\textwidth]{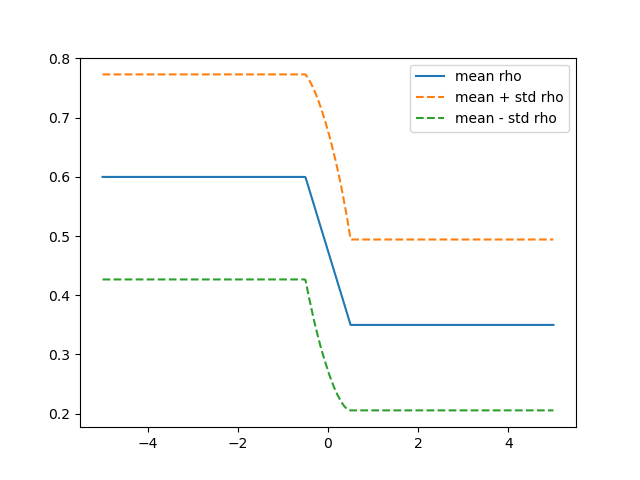}} 
\subfigure[$T=1.5$]{\includegraphics[width=0.45\textwidth]{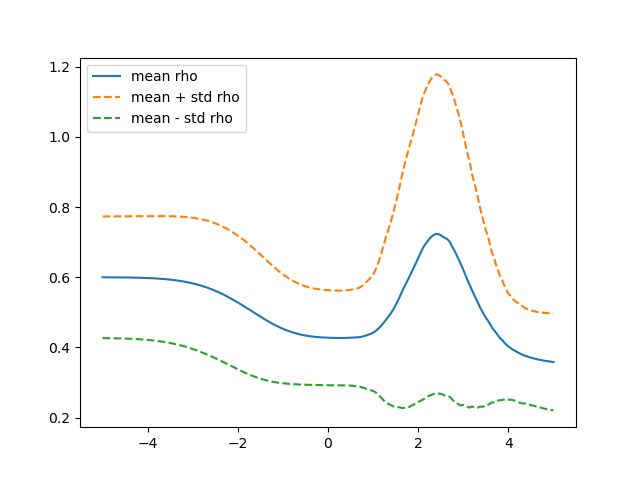}} 
\caption{Mean ($\pm$ standard deviation) for the density of the stochastic shock tube problem.}
\label{fig:STinit}
\end{figure}

\subsubsection{Generation of training data.}
In order to generate the training and test sets, we denote $\EuScript{Q}^{sob}$ as the first $16256$ Sobol points on the domain $[0,1]^6$. The training set $\train$ consists of the first $128$ of these Sobol points whereas as the test set is $\test = \EuScript{Q}^{sob}$. For each point $y_j \in\EuScript{Q}^{sob} $, the maps or (rather their numerical surrogate) $\map_{1,2,3}^{\Delta}(y_j)$ are generated from high-resolution numerical approximations $\U^\Delta(y)$ obtained using a second-order finite volume scheme. The domain is discretized into a uniform mesh of $K=2048$ disjoint intervals, with the HLL numerical flux and with the left and right cell-interface values obtained using a second-order WENO reconstruction. Time marching is performed using the second-order Runge-Kutta method with a CFL number of 0.475.  Once the finite volume solution is computed, the corresponding observables \eqref{eq:sod_obs} at each $y$ are approximated as
\begin{equation}\label{eq:sod_obs_approx}
\map^\Delta_j(y) = \frac{\sum_{i \in \Lambda_j} \rho_i(y)}{ \#(\Lambda_j)} , \quad  \Lambda_j = \{i \ | \ x_i \in I_j\},\quad j =1,2,3,
\end{equation}
where $x_i$ is the barycenter of the cell $\om_i$.
\subsubsection{Results.}
We follow the ensemble training procedure outlined in section \ref{sec:imp}. The network architecture, specified in table \ref{tab:afoilRefNet} is used together with the $114$ hyperparameter configurations of the previous section, with $5$ retrainings for each hyperparameter configuration. The best performing networks for each observable are identified as the ones with lowest prediction error \eqref{eq:perr} and these networks are presented in table \ref{tab:stBPNets}. From the table, we observe very slight differences for the best performing networks for the three observables. However, these networks are indeed different from the best performing networks for the flow past airfoil (compare with table \ref{tab:afoilBPNets}). Moreover, the predictions errors are larger than the flow past airfoil case, ranging between $2\%$ for the observable $\map_1$ to approximately $6\%$ for the other two observables. This larger error is expected as the variance of the underlying maps is significantly higher. Nevertheless, a satisfactory prediction accuracy is obtained for only $128$ training samples. 
\begin{table}[htbp]
\centering
\begin{tabular}{|l|l|l|l|l|l|l|}
\hline
Obs & Opt & Loss & $L^1$-reg & $L^2$-reg & Selection & \emph{BP}. Err mean  \\
\hline
$\map_1$ & ADAM & MSE & $7.8\times 10^{-6}$ & $0$ & wass-train & $2.212$ \\
\hline
$\map_2$ & ADAM & MSE & $7.8 \times 10^{-6}$ & $0$ & wass-train & $6.575$  \\
\hline 
$\map_3$ & ADAM & MSE & $7.8 \times 10^{-5}$ & $0$ & mean-train & $5.467$  \\
\hline
\end{tabular}
\caption{The hyperparameter configurations that correspond to the best performing network (one with least mean prediction error) for the three observables of the stochastic shock tube problem.}
\label{tab:stBPNets}
\end{table}
Results of the ensemble training procedure are depicted in the form of histograms for the prediction errors in figure \ref{fig:stHistErr}. From this figure, we observe that although there are some outliers, most of the hyperparameter configurations resulted in errors comparable to the best performing networks (see table \ref{tab:stBPNets}). Furthermore, a large majority of the hyperparameters resulted in substantially (a factor of $3-4$) smaller prediction errors, when compared to the linear least squares regression \eqref{eq:llsq}. 

A sensitivity study (not presented here) indicates very similar robustness results as for the flow past the airfoil. In figure \ref{fig:sterrvsamp}, we plot the prediction error with respect to the number of samples and observe that the mean error (over hyperparameter configurations) decays with an exponent of $0.5$ for the observables $\map_{2,3}$. On the other hand, the decay is a bit slower for $\map_1$. However, the low value of the constants still indicates significant compression allowing us to approximate this rather intricate problem with accurate neural networks. 

Finally, we compute the underlying probability distributions for each observable with the DLQMC algorithm \ref{alg:dlqmc}. In figure \ref{fig:stUQ}, we plot the maximum and mean real speedup \eqref{eq:esp1}, over hyperparameter configurations, for each observable, with respect to the number of training samples. From this figure, we observe maximum speedups of about $16$ for the observables $\map_{1,2}$ and $5$ for the observable $\map_3$, indicating that the DLQMC algorithm is significantly superior to the baseline QMC algorithm, even for this problem. 
\begin{figure}[htbp]
\subfigure[$\map_1$]{\includegraphics[width=0.3\textwidth]{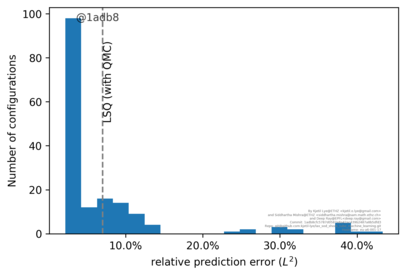}} 
\subfigure[$\map_2$]{\includegraphics[width=0.3\textwidth]{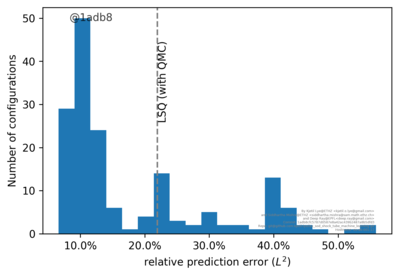}}
\subfigure[$\map_3$]{\includegraphics[width=0.3\textwidth]{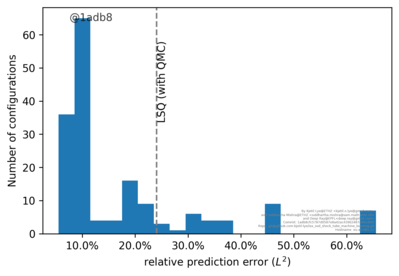}}
\caption{Histograms depicting distribution of the percentage relative mean $L^2$ prediction error \eqref{eq:perr} (X-axis) over number of hyperparameter configurations (samples, Y-axis)  for the observables of the stochastic shock-tube problem.}
\label{fig:stHistErr}
\end{figure}

\begin{figure}[htbp]
\subfigure[$\map_1$]{\includegraphics[width=0.3\textwidth]{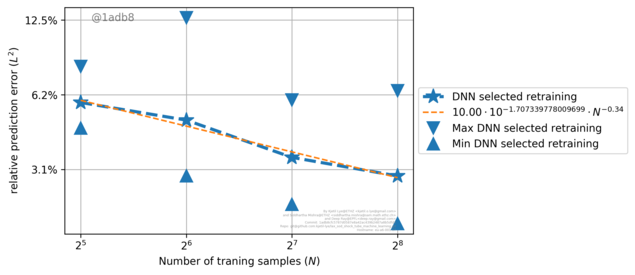}} 
\subfigure[$\map_2$]{\includegraphics[width=0.3\textwidth]{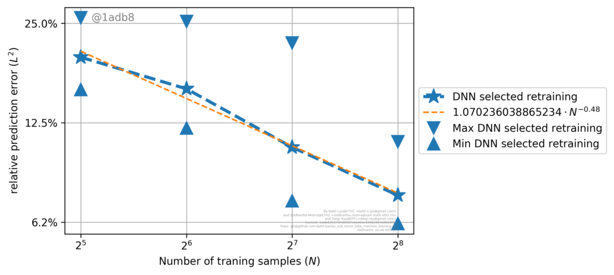}}
\subfigure[$\map_3$]{\includegraphics[width=0.3\textwidth]{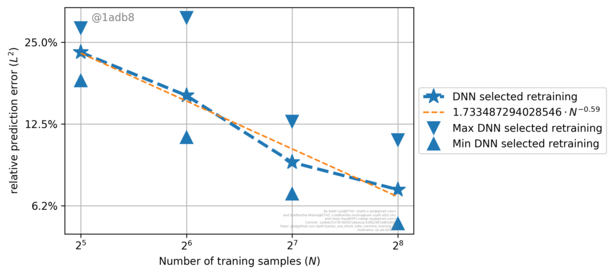}}
\caption{Relative percentage mean square prediction error \eqref{eq:perr} (Y-axis) for the neural networks approximating the stochastic shock tube problem, with respect to number of training samples (X-axis). For each number of training samples, we plot the mean, minimum and maximum (over hyperparameter configurations) of the prediction error. Only the selected (optimal) retraining is shown. }
\label{fig:sterrvsamp}
\end{figure}

\begin{figure}[htbp]
\subfigure[$\map_1$]{\includegraphics[width=0.3\textwidth]{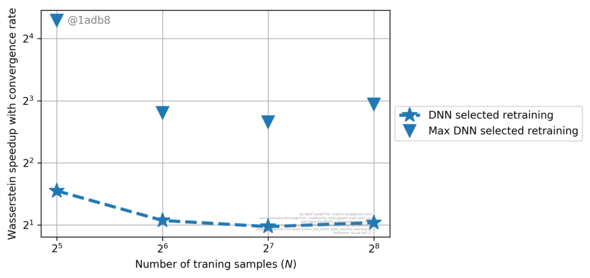}} 
\subfigure[$\map_2$]{\includegraphics[width=0.3\textwidth]{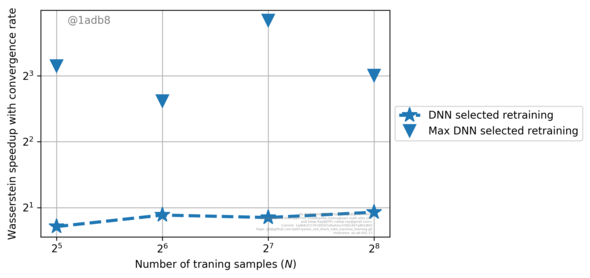}} 
\subfigure[$\map_3$]{\includegraphics[width=0.3\textwidth]{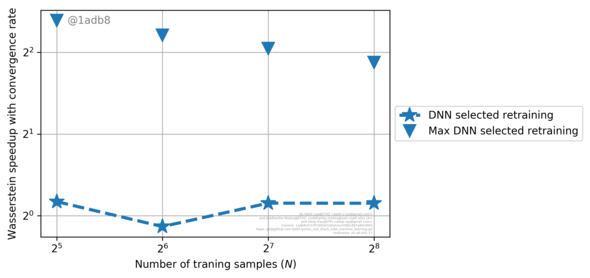}} 
\caption{Real speedups \eqref{eq:esp1} for the DLQMC algorithm over the baseline QMC algorithm (Y-axis) with respect to number of training samples (X-axis) for the stochastic shock tube problem. The maximum and mean speed up (over all hyperparameter configurations)  are shown.}
\label{fig:stUQ}
\end{figure}
\section{Discussion}
\label{sec:disc}
Problems in CFD such as UQ, Bayesian Inversion, optimal design etc are of the \emph{many query} type i.e, solving them requires evaluating a large number of realizations of the underlying input parameters to observable map \eqref{eq:ptoob}. Each realization involves a call to an expensive CFD solver. The cumulative total cost of the many-query simulation can be prohibitively expensive. 

In this paper, we have tried to harness the power of machine learning by proposing using \emph{deep fully connected neural networks}, of the form \eqref{eq:ann1}, to learn and predict the underlying parameters to observable map. However, a priori, this constitutes a formidable challenge. In section \ref{sec:theo}, we argue by a combination of theoretical considerations and numerical experiments that the crux of this challenge it to find neural networks to learn maps of \emph{low regularity in a data poor regime}. We are in this regime as the many parameters to observable maps \eqref{eq:ptoob} in fluid dynamics, can be at best, Lipschitz continuous. Moreover, given the cost, we can only afford to compute a few training samples. 

We overcome these difficulties with the following \emph{novel} ideas, first,  we chose to focus on learning observables rather than the full solution of the parametrized PDE \eqref{eq:cdep}. Observables are more regular than fields (see section \ref{sec:theo}) and might have less underlying variance and are thus, easier to learn, see \cite{LMM1} for a more recent theoretical justification. Second, we propose using \emph{low-discrepany sequences} to generate the training set $\train$. The equi-distribution property of these points might ensure lower generalization errors than those resulting from training data generated at random points, see the forthcoming paper \cite{RM1} for further details. Finally, we propose an \emph{ensemble training procedure} (see section \ref{sec:imp}) to systematically scan the hyperparameter space in order to winnow down the best performing networks as well as to test sensitivity of the trained networks to hyperparameter choices. 

We presented two prototypical numerical experiments for the compressible Euler equations in order to demonstrate the efficacy of our approach. We observed from the numerical experiments that the trained networks were indeed able to provide low prediction errors, despite being trained on very few samples. Moreover, a large number of hyperparameter configurations were close in performance to the best performing networks, indicating a significant amount of robustness to most hyperparameters. More crucially, the obtained generalization errors followed the theoretical predictions of section \ref{sec:theo} and indicating a high degree of compression, explaining why the networks generalized well. It is essential to state that these low prediction errors are obtained even when the cost of evaluating the neural network is several orders of magnitude lower than the full CFD solve. 

As a concrete application of our deep learning algorithm \ref{alg:DL}, we combined it with a Quasi-Monte Carlo (QMC) method to propose a deep learning QMC algorithm \ref{alg:dlqmc}, for performing forward UQ, by approximating the underlying probability distribution \eqref{eq:pf1} of the observable. In theorem \ref{thrm:4}, we prove that the DLQMC algorithm is guaranteed to out-perform the baseline QMC algorithm as long as the underlying neural network generalizes well. This is indeed borne out in the numerical experiments, where we observed about an order of magnitude speedup of the DLQMC algorithm over the corresponding QMC algorithm and two orders of magnitude speedup over a Monte Carlo algorithm. 

Thus, we have demonstrated the viability of using neural networks to learn observables and to solve challenging problems of the many-query type. Our approach can be compared to a standard model order reduction algorithm \cite{MORbook}. Here, generating the training data and training the networks is the \emph{offline} step whereas evaluating the network constitutes the \emph{online} step. Our results compare favorably with attempts to use Model order reduction for hyperbolic problems \cite{AAC,Cris1}, particularly when it comes to the low costs of training and evaluation. On the other hand, typical MOR algorithms provide a surrogate for the solution field, where we only predict the observable of interest in the approach presented here. 

Our results can be extended in many directions; we can adapt this setting to learn the full solution field, instead of the observables. However, this might be harder in practice as the field is even less regular than the observable. Our approach is entirely data driven, in the sense that we do not use any specific information about the underlying parameterized PDE \eqref{eq:cdep}. Thus, the approach can be readily extended to other forms of \eqref{eq:cdep}, for instance the Navier-Stokes equations or other elliptic and parabolic PDEs. Finally, we plan to apply the trained neural networks in the context of Bayesian inverse problems and shape optimization under uncertainty, in forthcoming papers.

\section*{Acknowledgements}
The research of SM is partially support by ERC Consolidator grant (CoG) NN 770880 COMANFLO. A large proportion of computations for this paper were performed on the
ETH compute cluster EULER.

\end{document}